\renewenvironment{proof}{\vspace{.1cm}\noindent{\sc Proof.}\hspace{0.10cm}\,\,}{\hfill$\blacksquare$ } 
\newtheorem{theorem}            {Theorem}[section] 
\newtheorem{definition}         [theorem]{Definition}
\newtheorem{lemma}              [theorem]{Lemma}
\newcommand{\nline}{{\mathbb N}}
\newcommand{\sbm}[1]{\left[\begin{smallmatrix} #1
	\end{smallmatrix}\right]}
\newcommand{\rfb}[1]{\mbox{\rm
		(\eref{#1})}\ifx\undefined\stillediting\else:\fbox{$#1$}\fi}
\newcommand{\bluff}{{\hbox{\raise 15pt \hbox{\hskip 0.5pt}}}}
\newfont{\roma}{cmr10 scaled 1200}
\begin{document}
\title{Flexible Distributed Flocking Control for Multi-agent Unicycle Systems}
        
\author{Tinghua Li and Bayu Jayawardhana % <-this % stops a space
    % \thanks{This work was supported by China Scholarship Council.}
	\thanks{Tinghua Li and Bayu Jayawardhana are with DTPA, ENTEG, Faculty  of  Science  and  Engineering,  University  of  Groningen, The Netherlands
		\{\tt\small t.li, b.jayawardhana\}@rug.nl.}%
}

\maketitle

\vspace{12pt}
\begin{abstract}
Currently, the general aim of flocking and formation control laws for multi-agent systems is to form and maintain a rigid configuration, such as, the $\alpha$-lattices in flocking control methods, where the desired distance between each pair of connected agents is fixed. This introduces a scalability issue for large-scale deployment of agents due to unrealizable geometrical constraints and the constant need of centralized orchestrator to ensure the formation graph rigidity. This paper presents a flexible distributed flocking cohesion algorithm for nonholonomic multi-agent systems. The desired geometry configuration between each pair of agents is adaptive and flexible. The distributed flocking goal is achieved using limited information exchange (i.e., the local field gradient) between connected neighbor agents and it does not rely on any other motion variables measurements, such as (relative) position, velocity, or acceleration. Additionally, the flexible flocking scheme with safety is considered so that the agents with limited sensing capability are able to
maintain the connectedness of communication topology at all time and avoid inter-agent collisions. The stability analysis of the proposed methods is presented along with numerical simulation results to show their effectiveness. 
\end{abstract}

\begin{IEEEkeywords}
Distributed Flocking Control, Adaptive Spacing Policy,  Collision Avoidance, Connectivity Preservation
\end{IEEEkeywords}

\section{Introduction}

The distributed cooperation of multiple autonomous agents has received a great amount of attention in various research fields and disciplines such as robotics\cite{Moshtagh-i,Saulnier}, autonomous vehicles \cite{besselink,turri}, unmanned aerial vehicles \cite{sun2021collaborative}, \cite{de2019flexible}, physics\cite{Frasca}, and computer science\cite{Reynolds}. Based on the interaction between inter-agent in the network, multi-agent systems must achieve certain tasks or missions (e.g., formation, obstacle avoidance and area coverage) via distributed control. Inspired by the bacterial swarming \cite{Kearns} in planktonic environment and in underwater fish schooling \cite{Okubo}, flocking is one of the typical collective behavior motions which involves coordination and cooperation among agents, and has been widely studied in the engineering field \cite{Zavlanos,Loizou}, where the Reynolds's rules \cite{Reynolds} are commonly used.

To achieve the aggregation and cohesion of the flocking group, the multi-agent systems must establish and maintain a connected network,  
%is supposed to be networked 
where each pair of connected agents can share/obtain (relative) information unilaterally or bidirectionally, and they can regulate their interaction.  
%through communication and regulate interaction. 
In most cases, the desired flocking configuration is defined as a \emph{rigid} structure where each agent is equally distanced from all of its neighbors (e.g., the \emph{$\alpha$-lattices} in \cite{olfati2006flocking}). In practice, such equal inter-agent spacing requirement for flocking can be restrictive and may lead to a well-posedness issue. Firstly, it restricts its application in an environment where other constraints must be taken into account (e.g., 
%However, it is not ideal to request a constant and equal inter-agent spacing gap for each pair of agents in the flocking evolution where other constraints have to be considered (e.g., 
obstacles, connectivity, and dynamic environment). Secondly, geometrical constraint prevents a large group of agents to achieve equal inter-agent spacing requirement with arbitrary network topology; hence it may lead to a well-posedness and scalability issue. 
%it can be impossible to form a uniform constant inter-agent space for a large group of agents in geometry (i.e., infeasible configuration). 
An alternative to this framework is the \emph{quasi-$\alpha$ lattices} approach where the flocking is regulated through a potential energy function and the deviation from its minimum corresponds to %follows a deviation energy function describing 
the degree to which the final configuration differs from an \emph{$\alpha$-lattice}. However, this approach still assumes an identical inter-agent distance for all edges. In contrast, graph rigidity theory has been used for defining arbitrary formation shape with different inter-agent distance and consequently for designing  distributed formation controllers. This framework is not suitable for achieving distributed flocking control as the set of inter-agent distances must be known and set {\it a priori} 
%To guarantee geometric rigidity with unequal interconnection, the formation shape control is studied while a nominated set of inter-agent distances has to be known as apriori 
\cite{Deghat}.

Motivated by the constant time-gap spacing policy in vehicle platooning which is velocity-dependent with each agent's preceding one \cite{ploeg2011,ploeg2013}, we propose a leaderless distributed adaptive spacing policy for the undirected distributed flocking cohesion control, %network 
which ensures that the final formation shape is well-posed and attractive. Due to the adaptive spacing policy, each inter-agent spacing gap is no longer required to be constant and equal. %, and it is made adaptive to guarantee the well-posedness of the final formation shape.  %which is
% suitable for the complex and dynamic environment, and 
%realizable in geometry for large scale of agents. Each inter-agent spacing gap is not necessary to be equal and constant. 
The adaptive spacing policy relies solely on each agent's local information and limited communication with neighbors. In particular, it is not velocity-dependent so that the neighbors' velocity is not required.

Another practical design constraint studied in this paper is the connectivity preservation of communication topology. In practical applications, communication topology can be broken, which can be due to agents' limited sensing capabilities and %, which would 
may result in flocking splitting and fragmentation during motion evolution \cite{olfati2006flocking}. %To avoid undesirable group disconnection and communication loss, 
Correspondingly, several approaches have been presented in literature to avoid undesirable group disconnection and communication loss. Some  local and global connectivity preservation approaches are the potential function-based, algebraic graph theory-based, and more recently optimization-based approaches. For the latter approach, %The recent control technique for optimization is 
the use of control barrier functions (CBFs) has been proposed %which drive the system to stay within the constraint set while achieving the main task 
\cite{Ames_CDC,Wang_CBF,Capelli_2020,Capelli_2021,Li_Flocking}. 
% The advantage of the CBF-based optimization method is to minimally invasive to the reference desired controller, the extra control action will only happen when the connectivity break is imminent \cite{Wang_CBF}. Some recent work has implemented the CBF-based method of algebraic connectivity \cite{Capelli_2020} with time delays, and the distributed connectivity maintenance \cite{Capelli_2021} for multi-agent systems with nonholonomic constraints\cite{Li_Flocking}. 
However, it is known that the quadratic programming (QP) associated to CBFs method can pose some feasibility challenges due to the tight control bounds and a dense quantity of agents \cite{Ames_TAC,Xiao_auto}. In this case, the application of potential function-based approach is preferred in practice to %  A more practical method for 
preserve %preserving 
the connected edge. % is based on potential function (PF). 
The edge connectivity can be preserved by using unbounded potential functions, which can lead to large control effort \cite{Zavlanos}-\cite{Jia}, or by using  bounded ones \cite{Wen,Li_bounded,Ajorlou} whose gradient is in the direction of preserving the edge connectivity. %the losing edge. 
A global connectivity maintenance method is studied in \cite{Kim,Poonawala} to maximize the second smallest eigenvalue of the graph Laplacian, where each agent requires %is required to get access to 
the global information of the network. % for eigenvector estimation or centralized calculation. 
In this paper, we focus on the local edge connectivity for our distributed flocking cohesion control that is easy to implement. %with local connectivity, 
%rather than a global one. % As we consider a distributed flocking cohesion control solely relies on local information, the local connectivity is discussed in this paper. 

% {\color{blue} 1. it can hereby be fused in the construction of adaptive spacing policy; 2. The connectivity maintenance can be guaranteed by the adaptive spacing policy-based flocking scheme.}
% and can hereby be fused in the construction of adaptive spacing policy.

%Moreover, the safety of large-scale autonomous flocking is challenging as collisions can happen between inter-agent. The best-known obstacle avoidance technique is the artificial potential function (APF) which relies on repulsive potential around the potential collision objects. It has been widely adopted with the navigation function \cite{Rimon}\cite{Dong} for collision-free path planning \cite{Ji}, flocking \cite{Zavlanos}, and connectivity maintenance navigation \cite{Li_bounded} tasks. However, to address the safety assurance by potential, massive information is required (e.g. (relative) position, orientation, and velocity of neighbor agents, etc.) during evolution, not to mention the control trade-off for the main task.  

%Hence, to address the above problem, let us 
In summary, %we consider a collective behavior of multi-agent systems in an unknown scalar field that is covered by a certain chemical substance/radiation, and each agent is equipped with a local sensory system for concentration measurements and hereby field gradient calculation \cite{Li_SS}. The main challenge and 
the main contributions of this paper are as follows. %discussed as follows.
%\begin{enumerate}
%    \item 
Distributed flocking cohesion control with adaptive spacing policy (ASP) is proposed for %the nonholonomic 
    multi-agent unicycle system which ensures practical flocking cohesion of the group to a desired configuration with well-posed geometry. It uses only local information and limited inter-agent communication, and it does not require neighbors' motion variables (e.g. velocity, acceleration, and orientation), relative distance, or the global position of each static obstacle. %inter-agent gap. %, such that the desired inter-agent gap for each connected edge can be unequal and varying to adapt to the complex flocking scenario and large scale of agents in geometry.
%    \item Depending solely on the local information and limited inter-agent communication, a distributed local flocking control law is proposed such that agents eventually converge to a desired configuration that satisfies the adaptive spacing policy. There is no need to attain neighbor's motion variables (e.g. velocity, acceleration and orientation), relative distance, or the global position of each static obstacle.
 %   \item 
 The proposed ASP can be deployed in both the static and switching graph. Moreover we show that the constraints of inter-agent collision avoidance and local connectivity preservation can directly be incorporated. %easily constructed.
%\end{enumerate}

The structure of the paper is as follows. Section \ref{sec:problem} provides the mathematical background and problem formulation. In Section  \ref{sec:control}, we introduce the design of adaptive flocking policy and cohesion control law, and provide rigorous convergence proof in the static graph case. In Section \ref{sec:dynamic}, we extend %applies 
the algorithm to the dynamic graph case and discuss the constraints of inter-agent collision and connectivity preservation. The simulations are shown in Section \ref{sec:simulation} and the conclusions are presented in Section \ref{sec:conclusion}.

%{\color{blue} Bayu is done up to here.}

\section{preliminaries}\label{sec:problem}
Consider a group of $N$ unicycle agents initialized in an unknown scalar field $J(x,y)$ in $\mathbb{R}^2$, which is a twice continuously differentiable function. % {\color{blue} (Tinghua: no need to be strictly concave/convex?)}. 
The kinematic dynamics of each unicycle-model agent $i\in \left \{ 1,2,3...,N \right \}$ is given by
\begin{equation}\label{eq:unicycle_model}
\begin{bmatrix}
\dot{x}_i \\
\dot{y}_i  \\
\dot{\theta}_i
\end{bmatrix}=\begin{bmatrix}
v_i\cos(\theta_i) \\
v_i\sin(\theta_i)  \\
\omega_i
\end{bmatrix},
\end{equation}
where $v_i$ and $\omega_i$ are the velocity and angular velocity inputs, respectively. It is assumed that 
each agent $i$ is equipped with on-board sensor systems that are able to measure the field gradient $\nabla J(x_i,y_i)$. For simplifying the notation, we denote $X_i:=\nabla J(x_i,y_i)$ and $X = \text{col}(X_i)$, whenever it is clear from the context. % with its onboard sensory system. 
We further assume that every agent has a local reference frame whose origin is located in its center of mass and it shares the same North-East (NE) orientation with other agents. Thus %shares the same orientation of North and East, such that 
the locally measured field gradient vector can be converted into this common frame $^{c}\textrm{}\sum $. 

\subsection{Communication Topology}\label{sec:int_topo}
The undirected communication graph $\mathcal{G}=(\mathcal{V},\mathcal{E})$ consists of the vertex set $\mathcal{V}=\left \{ 1,2,...,N \right \}$ representing agents and the corresponding edge set $\mathcal{E} \subseteq  \left \{  (i,j): i\in \mathcal{V}, j \in \mathcal{N}_i \right \}$ where $\mathcal{N}_i$ is the neighbor set of agent $i$. The graph $\mathcal G$ is called {\em connected} if for any pair of agents $i$ and $j$, there is a path that connects them. The edge $(i,j)\in\mathcal{E}$ indicates that agents $i$ and $j$ are connected by bidirectional information channel and they can exchange their field gradient measurement %$^c\nabla J_i$ and $^c\nabla J_j$ 
$^cX_i$ and $^cX_j$ in the common NE frame $^{c}\textrm{}\sum $ with each other (i.e., $(i,j) \in \mathcal{E}\Leftrightarrow  (j,i) \in \mathcal{E}$). For simplicity, we denote the field gradient in the frame $^{c}\textrm{}\sum $ by $X_i$.  %$\nabla J_i$ throughout the rest of the paper. %in the following sections. 
Based on these mild assumptions, our proposed design of flocking cohesion controller with ASP does not need other information exchanges (e.g., orientation, velocity, acceleration, etc.) which are more difficult to implement distributedly using local frame and sensors. 
%We note that there is no need to exchange other information (e.g., orientation, velocity, acceleration, etc.) in our subsequent design of ASP and flocking controller. 
The corresponding adjacency matrix $A= \left [ a_{ij}\right ]$ of $\mathcal G$ is defined by %with elements 
$a_{ij} =1 \Leftrightarrow (i,j)\in \mathcal{E} $, or otherwise $a_{ij} = 0$.

\subsection{Flexible Flocking Cohesion}
In the widely accepted Reynolds's flocking rule, the \emph{Flocking Centering} is defined as each agent attempting to stay close to nearby flockmates\cite{Reynolds}, and it is generalized by Olfati-Saber in \cite{olfati2006flocking} using % Reza presented 
a lattice-type structure to model a desired geometry, where each agent is equally distanced from all of its neighbors \cite{olfati2006flocking}. Among these works, the desired inter-agent gap $D^*$ is generally defined as identical and constant during the evolution. % (i.e. $\mu_{ij} \rightarrow D^*=C, \forall (i,j)\in\mathcal{E}$).

In this paper, the agents aim to realize a flexible flocking cohesion such that each edge length (i.e., inter-agent gap) obeys an adaptive spacing policy %(ASP)
\begin{equation}
     \mu_{ij}(t) \rightarrow D^*_{ij}(t)\in\mathbb{R}_+, \quad \forall (i,j)\in\mathcal{E}
\end{equation}
where $\mu_{ij} $ is the gradient difference of the neighboring agents $i$ and $j$ with $\left \| \cdot  \right \|$ denoting the Euclidean norm and is defined by
\begin{equation}\label{eq:muij}
%    \mu_{ij}(\nabla J_i, \nabla J_j) =  \left \|   \nabla J_{j}- \nabla J_{i} \right \|,
    \mu_{ij}(X_i, X_j) =  \left \|   X_{j}- X_{i} \right \|,
\end{equation}
and $D^*_{ij}$ is a target time-varying inter-agent spacing gap, which will be designed later to ensure the fulfillment of geometrical constraints during flocking. % at switching time instance $t$. 
In this paper, we consider the \emph{spacing gap} %is considered 
in a generic formulation using %that is given as 
the field gradient difference %$\mu_{ij}(\nabla J_i, \nabla J_j)$ 
$\mu_{ij}(X_i, X_j)$ 
as in \eqref{eq:muij}. This formulation includes the standard flocking problem where $J_i$ is a quadratic function, so that $\mu_{ij}$ is given by the Euclidean distance between agent $i$ and $j$. It can also represent a multi-agent systems coverage control problem where all agents must cover a certain field of physical variables (e.g., chemical concentrations). %  in this problem though, the following proposed \emph{adaptive spacing policy (ASP)} can be generally applied with various \emph{spacing gap} (e.g., relative Euclidean distance, field concentration difference, level set, field vector, etc. ) in coordination problems.

\subsection{Problem Formulation}
Given the communication graph $\mathcal{G}$ in a scalar field $J(x,y)$, %which is covered by a certain substance (e.g., airflow, heat, chemical variables, etc.), 
%the object of this paper is to 
we study the design of a distributed flexible flocking control law with an adaptive spacing policy, where each agent in the network relies only on its own and neighbors' local field gradient measurements. \\[0.1cm] %, that solves the following MAS problem.  %and achieves the task:
\textbf{Flexible Flocking Cohesion Problem with Adaptive Inter-agent Spacing Policy and Static Graph:} 
%\begin{itemize} 
%    \item 
%     \vspace{10pt}
     \begin{enumerate}
         \item {\bf Adaptive Inter-agent Spacing Policy}: Design a state-dependent ASP policy $D^*_{ij}$ is %state-dependent 
         for each connected agent pair in the 2D configuration space $(x,y)\in\mathbb{R}^2$ such that the inter-agent spacing %edge length (i.e., the inter-agent gap) 
         is allowed to be dynamic and inconsistent to ensure well-posedness of the geometrical constraints of the flock: %fulfill the geometrical constraints of the large-scale agent system, i.e.,
         \begin{equation}\label{eq:ASP}
%             D^*_{ij} = F(\nabla J_i, \nabla J_j), \quad \forall(i,j)\in\mathcal{E}
             D^*_{ij} = F(X_i, X_j), \quad \forall(i,j)\in\mathcal{E}
         \end{equation}
         
         \item {\bf ASP-based Flexible Flocking Cohesion}: %Driven by the designed 
         Design a distributed flocking cohesion control law $u_{i}$, $i\in\mathcal{V}$ such that % in a distributed manner, 
         the multi-agent unicycle system \eqref{eq:unicycle_model} achieves the flexible flocking cohesion where each agent converges to a desired configuration with well-posed geometry,
        i.e.,
            \begin{equation}\label{eq:convergence-1}
               \lim_{t\to\infty} \sum_{(i,j)\in\mathcal{E}} e_{ij}(t) =  \lim_{t\to\infty} \sum_{(i,j)\in\mathcal{E}}
                \Big(\mu_{ij}(t), D^*_{ij}(t)\Big) \rightarrow  0
                %       \lim_{t\to\infty} \sum_{(i,j)\in\mathcal{E}}
                % \Big(\mu_{ij}, D^*_{ij}\Big) \rightarrow  \left [ a,b \right ] \in \mathbb{R}
            \end{equation} 
            where $D^*_{ij}(t)$ is the adaptive inter-agent spacing policy for edge $(i,j)\in\mathcal{E}$ as in \eqref{eq:ASP}.
 \end{enumerate}

As an extension of the above formulated problem, we will also investigate %demonstrate that the design of adaptive spacing policy can be generalized and integrated 
the following two extended sub-problems involving additional hard constraints and with dynamic interaction graph. %In specific, consider any dynamic 
Consider the aforementioned setup of flexible distributed flocking cohesion control law $u_i$ with adaptive inter-agent spacing policy $D_{ij}^*$ and with dynamic communication topology $\mathcal{G}(t)=\left\{\mathcal{V},\mathcal{E}(t)\right\}$, where each agent has limited sensing and interaction range $r$. In this case, the constrained flocking cohesion control problem is associated to the following additional sub-problems.  %and is steered by the constrained ASP-based flocking law $u_{i\in\mathcal{V}}$, then the following tasks can be realized:
\begin{itemize}
     \item \textbf{Local Connectivity Preservation:} 
    % Consider any dynamic communication topology $\mathcal{G}(t)=\left\{\mathcal{V},\mathcal{E}(t)\right\}$ where agents are with limited sensing and interaction range $r$ and steered by the constrained ASP-based flocking law $u_{i\in\mathcal{V}}$, 
    For each agent $i\in\mathcal V$, the following implication holds: if the $j$-th agent becomes a neighbor of agent $i$ at time $t_k > t_0$ (i.e. $\mu_{ij}(t_k)<r$) then the pair $(i,j)$ remains connected for all $t\geq t_k$ (i.e., $\mu_{ij}(t)<r$ for all $t\geq t_k$). %All neighboring agents stay within the maximum communication zone such that the local connectivity of agent pair $(i,j)$ can be preserved, and the graph is guaranteed connected for all time $t\geq t_0$ during the flocking evolution if it is initially connected, i.e.,
%\begin{equation}\label{task:conn}
%            \left.\begin{matrix}
%      \begin{aligned}
%         \mu_{ij}(t) & < r
%        \\ \mathcal{G}(t) & \in \Gamma
%     \end{aligned}   
%    \end{matrix}\right\} \quad \text{if} \: (i,j)\in\mathcal{E}(t_0) , \: \mathcal{G}(t_0) \in \Gamma
%    \end{equation}
%     where $\Gamma$ is the set of all connected graph $\mathcal{G}_{g=1,2,...,m}$ for the multi-agent group $\mathcal{V}=\left\{1,2,...,N\right\}$, and $t_0$ denotes the initial time instant. 
     
     \item \textbf{Inter-agent Collision Avoidance:} All agents stay within the safe region among each other, i.e.,
\begin{equation}\label{task:collision}
          \mu_{ij}(t)  >0,  \quad \forall  (i,j)\in\mathcal{E}(t), \, t\geq t_0.
     \end{equation}    
\end{itemize}

% Note that we omit the time variable $t$ and use $\mathcal{G}, \mathcal{E}, \mathcal{N}_i, D^*_{ij}, \xi, \mu_{ij} $ directly for simplicity in the section of the dynamic graph. The time variable is only specified in the connectivity preservation proof of the dynamic topology.

\section{Spacing and Control Design for Static Graph}\label{sec:control}
In this section, we first present the design requirements of the adaptive spacing policy (ASP), and then propose the ASP-based flocking cohesion control method in the static and connected graph $\mathcal{G}$.  %where the connectedness is time-invariant.

\subsection{Adaptive Inter-agent Spacing Policy}\label{sec:ASP_static}
Given the group of $N$ unicycle agents as in %nonlinear system which consists of large-scale agents 
\eqref{eq:unicycle_model}, each pair of agents $(i,j)\in\mathcal{E}$ assumes a nominal spacing value $d_{i,\text{nom}}, d_{j,\text{nom}}\in\mathbb{R}_{+}$ between them, which can be identical $d_{i,\text{nom}}= d_{j,\text{nom}}$ or inconsistent with each other $d_{i,\text{nom}}\neq  d_{j,\text{nom}}$. Even when they are identical, we do not assume that they admit a well-posed geometrical shape. % is pre-set for each connected agent pair . Note $d_{\text{nom}}$ can be set identical or inconsistent in the graph, and there is no need to guarantee that the given $d_{\text{nom}}$ for the $N$-agents group is achievable in the geometry. 
For simplifying the presentation, we assume throughout the paper an identical and consistent $d_{\text{nom}}$ for all edges. %consistent on each edge, and then 
Accordingly we define the adaptively-desired inter-agent gap $D^*_{ij}$ with a scaling factor $s_{ij}$ by
\begin{equation}\label{eq:D*}
    D^*_{ij}(t) = d_{\text{nom}}s_{ij}(t) , \quad (i,j)\in\mathcal{E},
\end{equation}
and the corresponding spacing error $e_{ij}$ %for the corresponding edge 
is given by
 \begin{equation}\label{eq:eij}
    e_{ij}= \mu_{ij}-D^*_{ij},
\end{equation}
where $\mu_{ij}$ is the field gradient difference as in \eqref{eq:muij}. For designing a state-dependent policy $D^*_{ij}(t)$, we will impose the following design conditions: % the following conditions:
% \begin{enumerate}
% \item 
the scaling factor $s_{ij}$ is a positive and bounded state-dependent function for the pairing agents in the edge $(i,j)\in\mathcal{E}$, and it is constant whenever  
% % \item $D^*_{ij}$ is positive, bounded and continuously differentiable with respect to $s_{ij}$;
% \item The multiplier $s_{ij}$ is state-dependent for the agents $(i,j)\in\mathcal{E}$. 
%In particular, $\dot s_{ij} = 0$ when 
$e_{ij}=0$. The latter condition % is achieved (which 
implies that a well-posed geometry is formed and thus the scaling factor is fixed. 
% In particular, $s_{ij}=1$ and $\dot s_{ij} = 0$ when the nominal spacing $d_{\text{nom}}$ is achieved.
% \end{enumerate}
The following %positively-bounded 
scaling factor, which depends on an auxiliary state variable %multiplier $s_{ij}$ as a function of variable 
$d_{ij}$, fulfills the aforementioned design conditions:  
 \begin{equation}\label{eq:sij}
        \begin{aligned}
%            s_{ij} &=  e^{ \frac{1 - e^{-(d_{ij}-d_{\text{nom}})}}{1 + e^{-(d_{ij}-d_{\text{nom}})}}\lambda} \in \left (e^{-\lambda} ,e^{\lambda}\right ) 
            % s_{ij} &=  e^{-\lambda \tanh{\left(\frac{d_{\text{nom}}-d_{ij}}{2}\right)}} \in \left (e^{-\lambda} ,e^{\lambda}\right ) 
             s_{ij} &=  e^{\lambda \tanh{\left(\frac{d_{ij}-d_{\text{nom}}}{2}\right)}} \in \left (e^{-\lambda} ,e^{\lambda}\right ) 
        \end{aligned}
        \end{equation}
      
    \begin{equation}\label{eq:d_dot}
%       \dot  d_{ij} =  \frac{1 - e^{-e_{ij}}}{1 + e^{-e_{ij}}} \in \left ( -1,1 \right ), \quad  d_{ij}(t_0) = d_{\text{nom}}
       \dot  d_{ij} =  \tanh{\left(\frac{e_{ij}}{2}\right)} %\frac{1 - e^{-e_{ij}}}{1 + e^{-e_{ij}}} 
       \in \left ( -1,1 \right ), \quad  d_{ij}(t_0) = d_{\text{nom}}
       % \dot  d_{ij} =  \frac{1 - e^{-e_{ij}(d_{ij}-d_{\text{nom}})^2}}{1 + e^{-e_{ij}(d_{ij}-d_{\text{nom}})^2}} \in \left ( -1,1 \right )
    \end{equation}
where $\lambda\in\mathbb{R_+}$ is a positive scaling parameter. For a static graph as considered in this section, $\lambda$ can be any positive constant. However, for a dynamic graph as treated in Section \ref{sec:dynamic}, $\lambda$ needs to be designed properly to maintain connectivity preservation and it will be presented in Section \ref{sec:dynamic}. % ( will be further designed in the analysis of connectivity preservation in Section \ref{sec:dynamic}). 
%In \eqref{eq:sij}--\eqref{eq:d_dot}, %this formulation, 
%the variable $d_{ij}\in\mathbb{R}$ involves with the spacing error $e_{ij}$ in \eqref{eq:d_dot}, and the scaling multiplier $s_{ij}$ involves with the difference between $d_{ij}$ and the nominal distance $d_{\text{nom}}$. 
We note both $s_{ij}$ and $\dot d_{ij}$ are set bounded in order to avoid over-expansion or constriction. Since $s_{ij}>e^{-\lambda}>0$, it  follows from \eqref{eq:D*} that $D^*_{ij}>d_{\text{nom}}e^{-\lambda}>0$, which implies that the adaptive spacing rule of $s_{ij}$ and $d_{ij}$ in \eqref{eq:sij}--\eqref{eq:d_dot} always leads to an admissible desired gradient difference $D^*_{ij}$. As will be shown later in our main technical result, we will use $s_{ij}$ as the state variable of the adaptive spacing law and its time-derivative can be computed directly using the smoothness property of the tangential hyperbolic function. 
% {\color{red} Tinghua Update: For simplifying the notions, we define an error vector for the whole group as $ E = \left [ E_1, E_2, ..., E_m\right]^\top \in \mathbb{R}^m$, where $m$ is the number of edges, and each element $E_k = e_{ij}$ denotes the spacing error \eqref{eq:eij}. Similarly, we have a multiplier vector $S = \left[ S_1, S_2,..., S_m\right]^\top$ with the element $S_k = s_{ij}$ in \eqref{eq:D*}.   }

% always exists and we do not aim to collide the inter-agents. 
%In particular, it can be seen that $s_{ij} =1$ when $d_{ij}-d_{\text{nom}} =0$. If $\dot d_{ij} = 0, e_{ij}=0$ in this case, then we know the configuration with the pre-set $d_{\text{nom}}$ is possible to be formed in geometry. Otherwise, the adaptive spacing will vary with $s_{ij} \in  \left (e^{-\lambda} ,e^{\lambda}\right )  \setminus  \left\{ 1 \right\}$. 

%In addition, $e_{ij}$ is differential with respect to the signal gradient difference $\mu_{ij} $ and the scaling multiplier $s_{ij}$ in the formulation \eqref{eq:sij}\eqref{eq:d_dot}, and this property will be used in the flocking convergency analysis in the subsequent section.

\subsection{Flexible Flocking Cohesion with Adaptive Spacing Policy}\label{sec:Flocking_Spacing}
The proposed distributed flexible flocking controller $u_i=\sbm{v_i\\\omega_i}$ for each agent $i\in\mathcal V$ is given by 
\begin{equation}\label{eq:flocking_controller}   
        \begin{aligned}
                 v_i &= -\frac{K_f}{N_i} \vec{o}_i \nabla^2 J_{i} \left(\frac{\partial P_i}{\partial X_i} \right)^\top 
                \\ \omega_{i} &= \frac{K_f}{N_i}  \vec{o}^\perp_i  \nabla^2 J_{i} \left(\frac{\partial P_i}{\partial X_i}\right)^\top
        \end{aligned}
        \end{equation}
where $N_i =\left | \mathcal{N}_i \right | $ is the quantity of the neighbors around agent $i$, $\vec{o}_i = \sbm{ \cos(\theta_{i}) & \sin(\theta_{i})}$ and $\vec{o}^\perp_i = \sbm{ \sin(\theta_{i}) & -\cos(\theta_{i}) } $ are the unit (orthogonal) orientation vectors. 

In addition, $P_i$ denotes the agent's flocking potential, which is assumed to be a strictly convex and differentiable function of $e_{ij}$, $j\in \mathcal N_i$ with a minimum at $0$. 
%will be defined shortly below. 
%Given the adaptive spacing policy \eqref{eq:D*}, the total potential of each agent $i\in\mathcal{V}$ is formulated as a quadratic function of the spacing error between the agent $i$ and its neighbors $j\in\mathcal{N}_i$ such that %which satisfies the conditions:
%\begin{enumerate}
%     \item $P_i$ is a strictly convex and differentiable function of $e_{ij}$, $j\in \mathcal N_i$; % and differentiable with respect to the spacing error $e_{ik}$ defined in \eqref{eq:eij};
%    \item $ P_i$ reaches its minimum if the signal difference $\mu_{ik}$ approaches the adaptive and desired distance $D^*_{ik}$ (i.e., $e_{ik}=0$). 
%\end{enumerate}
One simple example of the potential function is
\begin{equation}\label{eq:poten}
    P_i =\frac{1}{2} \sum_{k\in\mathcal{V}\setminus \left\{i\right\}}a_{ik}e^2_{ik} =\frac{1}{2}\sum_{j\in\mathcal{N}_i}e^2_{ij},
\end{equation}
where $a_{ik}$ is the element of adjacency matrix $A= \left [ a_{ik}\right ]$ of the graph. 
% {\color{red} Tinghua Update: In this case, the total potential of the whole multi-agent system can be denoted as $H = \sum_{i\in\mathcal{V}} P_i $.}

We emphasize again that each pair of connected agents are able to communicate with each other and exchange their local measurements of the field gradient. Therefore the control law \eqref{eq:flocking_controller} can be implemented in each local frame of agent $i$, and relies only on the neighbors' field gradient measurement. There is no need to receive the neighbor's motion variables (e.g. acceleration, angular velocity, orientation, etc.). %Here follows the theorem:

\begin{theorem}\label{thm:static}
      Consider a multi-agent unicycle system  \eqref{eq:unicycle_model} with %and be steered by 
      the distributed flocking controller \eqref{eq:flocking_controller}. Suppose that the undirected communication graph $\mathcal{G}$ is static and connected during evolution. Then given the state-dependent adaptive spacing policy \eqref{eq:D*}, the agents are able to form a flexible configuration and converge to a well-posed geometry with desired inter-agent space, i.e.
        \begin{equation}
            \lim_{t\to\infty} \sum_{(i,j)\in\mathcal{E}}
         e_{ij} = \lim_{t\to\infty} \sum_{(i,j)\in\mathcal{E}} \Big(\mu_{ij}-D^*_{ij} \Big)\rightarrow  0
        \end{equation}
      holds.
\end{theorem}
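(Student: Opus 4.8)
The plan is to run a Lyapunov/invariance argument built on the aggregate flocking energy $V:=\sum_{i\in\mathcal V}P_i$, which is bounded below since each $P_i$ is convex with its minimum at $e_{ij}=0$, $j\in\mathcal N_i$. Along \eqref{eq:unicycle_model} the measured gradient obeys $\dot X_i=\tfrac{d}{dt}\nabla J(x_i,y_i)=\nabla^2 J_i\,[\dot x_i\ \dot y_i]^{\top}=v_i\,\nabla^2 J_i\,\vec{o}_i^{\top}$, so only the translational input $v_i$ enters $\dot X_i$, while the steering input $\omega_i$ affects $\dot V$ only indirectly, through the subsequent evolution of $\vec{o}_i$. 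Differentiating, $\dot V=\sum_i\big(\tfrac{\partial P_i}{\partial X_i}\dot X_i+\sum_{j\in\mathcal N_i}\tfrac{\partial P_i}{\partial X_j}\dot X_j+\sum_{j\in\mathcal N_i}\tfrac{\partial P_i}{\partial d_{ij}}\dot d_{ij}\big)$. Using that $\mathcal G$ is undirected with $\mu_{ij}=\mu_{ji}$ (hence $e_{ij}=e_{ji}$ and, from \eqref{eq:d_dot} with the symmetric initial value, $d_{ij}=d_{ji}$), that $\partial\mu_{ij}/\partial X_i=-\partial\mu_{ij}/\partial X_j=-n_{ij}^{\top}$ with $n_{ij}:=(X_j-X_i)/\|X_j-X_i\|$, and that connected agents use the same per-edge potential shape (so $\partial P_i/\partial e_{ij}=\partial P_j/\partial e_{ji}$), I would relabel $i\leftrightarrow j$ in the ``neighbour'' gradient terms and collect them with the ``own'' terms into $2\sum_i\tfrac{\partial P_i}{\partial X_i}\dot X_i$.

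I would then substitute the controller \eqref{eq:flocking_controller}. By symmetry of $\nabla^2 J_i$ one rewrites $v_i=-\tfrac{K_f}{N_i}\tfrac{\partial P_i}{\partial X_i}\nabla^2 J_i\,\vec{o}_i^{\top}$, hence $\tfrac{\partial P_i}{\partial X_i}\dot X_i=v_i\big(\tfrac{\partial P_i}{\partial X_i}\nabla^2 J_i\,\vec{o}_i^{\top}\big)=-\tfrac{N_i}{K_f}v_i^{2}$, so the gradient terms contribute the dissipation $-\tfrac{2}{K_f}\sum_i N_i v_i^{2}\le 0$. For the spacing terms, $\tfrac{\partial P_i}{\partial d_{ij}}\dot d_{ij}=-\tfrac{\partial P_i}{\partial e_{ij}}\,d_{\mathrm{nom}}\tfrac{\partial s_{ij}}{\partial d_{ij}}\tanh(\tfrac{e_{ij}}{2})$ where, from \eqref{eq:sij}, $\tfrac{\partial s_{ij}}{\partial d_{ij}}=\tfrac{\lambda}{2}s_{ij}\,\mathrm{sech}^2(\tfrac{d_{ij}-d_{\mathrm{nom}}}{2})>0$; since $P_i$ is convex in $e_{ij}$ with minimiser $0$, both $\partial P_i/\partial e_{ij}$ and $\tanh(e_{ij}/2)$ carry the sign of $e_{ij}$, so each such term is $\le 0$ and vanishes only at $e_{ij}=0$. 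This gives $\dot V=-\tfrac{2}{K_f}\sum_i N_i v_i^{2}-\sum_i\sum_{j\in\mathcal N_i}c_{ij}\,\tfrac{\partial P_i}{\partial e_{ij}}\tanh(\tfrac{e_{ij}}{2})\le 0$ with $c_{ij}:=\tfrac{d_{\mathrm{nom}}\lambda}{2}s_{ij}\,\mathrm{sech}^2(\tfrac{d_{ij}-d_{\mathrm{nom}}}{2})>0$.

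Then $V$ is non-increasing and bounded below, so it converges and both non-negative terms in $-\dot V$ are integrable on $[0,\infty)$. Applying Barbalat's lemma — once uniform continuity is secured, i.e.\ once $v_i$, $e_{ij}$ and their time-derivatives stay bounded, which holds as soon as the positions remain in a compact set so that $\nabla^2 J$ and the weights $s_{ij},\mathrm{sech}^2(\cdot)$ are bounded and, importantly, the $\mathrm{sech}^2$ factors are bounded \emph{away from} $0$ — yields $v_i\to 0$ and $c_{ij}\,\tfrac{\partial P_i}{\partial e_{ij}}\tanh(e_{ij}/2)\to 0$; since $\partial P_i/\partial e_{ij}\cdot\tanh(e_{ij}/2)$ vanishes only at $e_{ij}=0$ and $c_{ij}$ is bounded below, this forces $e_{ij}(t)\to 0$ for every $(i,j)\in\mathcal E$, and therefore $\sum_{(i,j)\in\mathcal E}e_{ij}(t)\to 0$. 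The step I expect to be the main obstacle is precisely this last one: certifying (or adding as a standing hypothesis) boundedness/precompactness of the closed-loop trajectory — ruling out unbounded drift of the positions and of $d_{ij}$ — so that $\nabla^2 J$ stays bounded and the adaptive weight $c_{ij}$ does not degenerate; granted that, one may equivalently invoke LaSalle's invariance principle on the augmented state $(x_i,y_i,\theta_i,d_{ij})$, whose largest invariant subset of $\{\dot V=0\}$ is exactly $\{v_i=0,\ e_{ij}=0\ \forall (i,j)\}$.
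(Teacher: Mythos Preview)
Your proposal is correct and follows essentially the same route as the paper: the same energy $V=\sum_i P_i$ (the paper's additional term $\tfrac12\sum_i(\cos^2\theta_i+\sin^2\theta_i)$ is a constant), the same splitting of $\dot V$ into the translational dissipation $-\tfrac{2}{K_f}\sum_i N_i v_i^{2}$ and the adaptive-spacing term $-d_{\mathrm{nom}}\sum H_{ij}K_{ij}\le 0$, and the same invariance-type conclusion (the paper embeds the orientation as $(\cos\theta_i,\sin\theta_i)$ and invokes LaSalle rather than Barbalat). Your flagging of precompactness of $(x_i,y_i,d_{ij})$ as the main obstacle is apt: the paper simply asserts boundedness of the full state from $\dot V\le 0$ without further argument, and does not address your valid concern that $K_{ij}$ (your $c_{ij}$) could degenerate if $d_{ij}$ drifts; note also that the paper's separate ``Case~2'' analysis of $E_i=0$ is in fact redundant, since $H_{ij}K_{ij}=0$ already forces $e_{ij}=0$ on $\{\dot V=0\}$, exactly as you conclude.
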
   

\begin{proof}
    In order to apply the well-known LaSalle's invariance principle \cite{sastry2013nonlinear} to the closed-loop multi-agent unicycle systems, %To provide rigorous analysis for the unicycle-model multi-agent system and the application of LaSalle's invariance principle \cite{sastry2013nonlinear}, 
    we first need to establish that all state trajectories are bounded for the corresponding autonomous system. As the state $\theta_i$ in unicycle model \eqref{eq:unicycle_model} is evidently unbounded, we redefine each agent's dynamics with extended states by 
%    {\color{red}Tinghua Update: use the new dynamics below...
    \begin{equation}\label{eq:state_z}
        z_i = \begin{bmatrix} x_i & y_i & \cos(\theta_i) & \sin(\theta_i)\end{bmatrix} 
    \end{equation}
    which follows that     \begin{equation}\label{eq:state_z_dot}
     \begin{aligned}
    \dot z_i =\sbm{
     v_i\cos(\theta_i) \\
     v_i\sin(\theta_i)  \\
    -\cos(\theta_i)\sin(\theta_i)\dot \theta_i  \\
    \cos(\theta_i)\sin(\theta_i)\dot \theta_i} & = \sbm{\cos(\theta_i) & 0 \\
    \sin(\theta_i)&0 \\
    0&-\cos(\theta_i)\sin(\theta_i)  \\
     0&\cos(\theta_i)\sin(\theta_i) } \sbm{ v_i \\ \omega_i } .
     \end{aligned}
   \end{equation}
   %  \begin{equation}\label{eq:state_z}
   %      z_i = \begin{bmatrix} x_i \\ y_i \\  \dot x_i \\ \dot y_i\end{bmatrix} =  \begin{bmatrix} x_i \\ y_i \\ v_i\cos(\theta_i) \\ v_i\sin(\theta_i) \end{bmatrix}
   %  \end{equation}
   %  which follows that     \begin{equation}\label{eq:state_z_dot}
   %   \begin{aligned}
   %  \dot z_i = \begin{bmatrix}
   %  \dot{x}_i \\
   %  \dot{y}_i  \\
   % \ddot x_i \\
   % \ddot y_i
   %  \end{bmatrix} &=\begin{bmatrix}
   %  v_i\cos(\theta_i) \\
   %  v_i\sin(\theta_i)  \\
   %  a_i\cos(\theta_i) -v_i\sin(\theta_i)\omega_i \\
   %  a_i\sin(\theta_i) + v_i\cos(\theta_i)\omega_i
   %  \end{bmatrix}
   %  \\& = \begin{bmatrix} z_{i3}\\z_{i4}\\0\\0\end{bmatrix}  +\begin{bmatrix} 0&0\\0&0\\ \frac{z_{i3}}{\sqrt{z^2_{i3}+z^2_{i4}}} & -z_{i4} \\ \frac{z_{i4}}{\sqrt{z^2_{i3}+z^2_{i4}}} & z_{i3} \end{bmatrix} \begin{bmatrix} a_i \\ \omega_i\end{bmatrix} .
   %   \end{aligned}
   % \end{equation}
Using the flocking controller \eqref{eq:flocking_controller}, all control inputs are state-dependent which makes the closed-loop systems autonomous. %we have the state-dependent control inputs acceleration $a_i(z_{i1},z_{i2},z_{i3},z_{i4})$ and angular velocity $\omega_i(z_{i1},z_{i2},z_{i3},z_{i4})$, and therefore the closed-loop systems is autonomous. %newly-estabilised unicycle system \eqref{eq:state_z_dot} is autonomous. 
Let us consider %Then 
the following Lyapunov function for the multi-agent system %can be defined 
with the extended state \eqref{eq:state_z} %such that
%{\color{red}Tinghua Update:
\begin{equation}\label{eq:Lyapunov}
    V(z)   = \sum_{i\in \mathcal{V} }P_i +\frac{1}{2}\sum_{i\in\mathcal{V}}z^2_{i3} +\frac{1}{2}\sum_{i\in\mathcal{V}}z^2_{i4}
\end{equation}
% \begin{equation}\label{eq:Lyapunov}
%     V(z)   =  %\underbrace
%     {\sum_{i\in \mathcal{V} }P_i}%_{V_f(z_{i1},z_{i2})} 
%     + %\underbrace
%     {\frac{1}{2}\sum_{i\in\mathcal{V}}z^2_{i3} + \frac{1}{2}\sum_{i\in\mathcal{V}} z^2_{i4} }%_{V_v(z_{i3},z_{i4})=\frac{1}{2}\sum_{i\in\mathcal{V}} v^2_i},
% \end{equation}
where $P_i$ is a strictly convex and differentiable potential function of $e_{ij}$, $j\in\mathcal N_i$. Note that the first term of \eqref{eq:Lyapunov} can be expressed as a function of $z_{i1}$ and $z_{i2}$ for all $i\in\mathcal V$, so that $V$ is a function of all state variables $z$.  
%Note the gradient measurement $X_i:=\nabla J_i (z_{i1},z_{i2})$ 
%in the planar scalar field 
%is state-dependent, hereby the flocking function $V_f(z_{i1},z_{i2})$ which consists of the potential $P_i$ is dependent on the state $(z_{i1},z_{i2})$ in \eqref{eq:state_z}. 
%In the following, we adopt the potential function in the form of \eqref{eq:poten}. 
As the communication graph $\mathcal{G}$ is assumed to be  static and connected for all time, the time-derivative of $V(z)$ is given by %can be calculated directly with respect to the time such that
%{\color{red} Tinghua Update:
\begin{equation}\label{eq:v_dot_expres}
\begin{aligned}
    \dot V &= \sum_{i\in\mathcal{V}} \dot P_i + \underbrace{\sum_{i\in\mathcal{V}} \left( -\cos(\theta_i)\sin(\theta_i)\dot \theta  + \cos(\theta_i)\sin(\theta_i)\dot \theta \right)}_{=0}
    \\& =\sum_{(i,j)\in\mathcal{E}} \frac{\partial  P_i}{\partial  e_{ij}}\dot e_{ij}
    % \\& = Tinghua \sum_{(i,j)\in\mathcal{E}} \frac{\partial  P_i}{\partial  e_{ij}} \left(\frac{\partial e_{ij}}{\partial \mu_{ij}} \dot \mu_{ij} + \frac{\partial e_{ij}}{\partial s_{ij}} \dot s_{ij} \right)
    % \\& =  \sum_{(i,j)\in\mathcal{E}}  \frac{\partial  P_i}{\partial  \mu_{ij}} \dot \mu_{ij} + \frac{\partial P_i}{\partial s_{ij}} \dot s_{ij}
    % \\& = \sum_{(i,j)\in\mathcal{E}} \frac{\partial  P_i}{\partial  \mu_{ij}} \dot \mu_{ij} + \frac{\partial P_i}{\partial s_{ij}} \dot s_{ij}
    % \\& = \sum_{(i,j)\in\mathcal{E}} \frac{\partial  P_i}{\partial  \mu_{ij}} \Vec{o}_{\mu_{ij}}\left(\dot X_j - \dot X_i\right)+ \frac{\partial P_i}{\partial s_{ij}} \dot s_{ij}   
    % % \\& =\frac{\partial H }{\partial E} \dot E
    % % \\& = \frac{\partial H }{\partial E}\left ( \frac{\partial E}{\partial \nabla J} \dot{\nabla J}+ \frac{\partial E}{\partial S}\dot S   \right)
\end{aligned}
\end{equation}
% where $E$ and $S$ are the vectors of spacing error $e_{ij}$ and scaling multiplier $s_{ij}$ for agent pair $(i,j)$. \\
% \begin{equation}\label{eq:v_dot_expres}
% \begin{aligned}
%     \dot V &=\sum_{i\in\mathcal{V}} \dot P_i +  \sum_{i\in\mathcal{V}}v_i a_i 
%     %\\& 
%     = \sum_{(i,j)\in\mathcal{E}} \frac{\partial  P_i}{\partial  e_{ij}}\dot e_{ij}+  \sum_{i\in\mathcal{V}}v_i a_i, 
% \end{aligned}
% \end{equation}
where $\dot e_{ij}$ can be rewritten by using the ASP \eqref{eq:sij} and \eqref{eq:d_dot} as follows
\begin{equation}\label{eq:e_dot}
\begin{aligned}
    \dot e_{ij} &= \frac{\partial e_{ij}}{\partial \mu_{ij}} \dot \mu_{ij} + \frac{\partial e_{ij}}{\partial s_{ij}} \dot s_{ij}.
    % \\& = \frac{\partial e_{ij}}{\partial \mu_{ij}} \Big( \frac{\partial \mu_{ij}}{\partial \nabla J_j}\dot{\nabla J_i} + \frac{\partial \mu_{ij}}{\partial \nabla J_i}\dot{\nabla J_i}\Big) + \frac{\partial e_{ij}}{\partial s_{ij}} \dot s_{ij}
\end{aligned}
\end{equation} 
% and $\dot \mu_{ij}$ is expressed by
% \begin{equation}
%     \begin{aligned}
%         \dot \mu_{ij} = \frac{\partial \mu_{ij}}{\partial \nabla J_j} + \frac{\partial \mu_{ij}}{\partial \nabla J_i}
%     \end{aligned}
% \end{equation}
% Hence, $\dot V(z)$ can be denoted by
% \begin{equation}
%  \dot V(z) = \sum_{i\in\mathcal{V}} \sum_{j\in\mathcal{N}_i}\Big( \frac{\partial P_i}{\partial \mu_{ij}}\dot{\mu}_{ij} +\frac{\partial P_{i}}{\partial s_{ij}} \dot s_{ij}\Big) + \sum_{i\in\mathcal{V}}v_i a_i 
%       % \dot V(z) = \sum_{i\in\mathcal{V}} \sum_{j\in\mathcal{N}_i}\Big( \frac{\partial P_i}{\partial \nabla J_j}\dot{\nabla J_j} + \frac{\partial P_i}{\partial \nabla J_i}\dot{\nabla J_i}+ \frac{\partial e_{ij}}{\partial s_{ij}} \dot s_{ij}\Big) + \sum_{i\in\mathcal{V}}v_i a_i 
% \end{equation}
On the one hand, from the definition of $e_{ij}$, it follows that  $\frac{\partial e_{ij}}{\partial \mu_{ij}}=1$ 
% is the partial derivatives of spacing error $e_{ij}$ with respective to $\mu_{ij}$ of the edge $(i,j)$, 
and $\frac{\partial e_{ij}}{\partial s_{ij}}  = -d_{\text{nom}}$. On the other hand, the time-derivative of the scaling multiplier $s_{ij}$ is given by 
\begin{equation}\label{eq:sij_dot}
\begin{aligned}
      \dot s_{ij} &=
      e^{\lambda \tanh{\left (\frac{d_{ij} -d_{\text{nom}}}{2}\right )}}\frac{2\lambda e^{-(d_{ij}-d_{\text{nom}})}}{(1+e^{-(d_{ij}-d_{\text{nom}})})^2}\dot d_{ij}
      \\& = e^{\lambda \tanh{\left (\frac{d_{ij} -d_{\text{nom}}}{2}\right )}} \frac{2\lambda e^{-(d_{ij}-d_{\text{nom}})}}{(1+e^{-(d_{ij}-d_{\text{nom}})})^2} \tanh{\left(\frac{e_{ij}}{2}\right)}.
      % e^{ -\frac{1 - e^{-(d_{ij}-d_{\text{nom}})}}{1 + e^{-(d_{ij}-d_{\text{nom}})}}\lambda}
      % \frac{2\lambda e^{-(d_{ij}-d_{\text{nom}})}}{(1+e^{-(d_{ij}-d_{\text{nom}})})^2}\dot d_{ij}
      % \\& = e^{ \frac{1 - e^{-(d_{ij}-d_{\text{nom}})}}{1 + e^{-(d_{ij}-d_{\text{nom}})}}\lambda} \frac{2\lambda e^{-(d_{ij}-d_{\text{nom}})}}{(1+e^{-(d_{ij}-d_{\text{nom}})})^2}  \frac{1 - e^{-e_{ij}}}{1 + e^{-e_{ij}}} .
 \end{aligned}
\end{equation}
With regards to $\dot \mu_{ij}$, it can be computed by
\begin{equation}
       \dot  \mu_{ij} = \frac{\mathrm{d}  \mu_{ij}}{\mathrm{d}(X_j - X_i)} (\dot X_j - \dot X_i) 
\end{equation}
where $\frac{\mathrm{d}  \mu_{ij}}{\mathrm{d}(X_j - X_i)}=\frac{X_j - X_i}{\left\|X_j-X_i\right\|}$ is denoted by a  unit vector $\Vec{o}_{\mu_{ij}}=\sbm{\cos(\beta_{ij}) & \sin(\beta_{ij})}$,  
%{\color{blue} Tinghua Update: The unit vector is defined to avoid the stationary case where $X_j-X_i = \bold{0}$ in the invariance set $\Omega=\left \{ z:\dot{V}(z)=0 \right \}$ (see the blue highlights on page-5). }
% to avoid the singularity in $\frac{X_j - X_i}{\left\|X_j-X_i\right\|}$, 
and $\beta_{ij}$ is defined by %the gradient difference angle for agents $(i,j)\in\mathcal{E}$ defined by
\begin{equation}\label{eq:beta}
    \beta_{ij} = \text{tan}^{-1}\frac{X_{j,y} - X_{i,y} }{ X_{j,x} - X_{i,x}}.
\end{equation}
Following a similar computation to the term of $ \frac{\partial P_i}{\partial X_i}$ in % The same strategy is applied in the controller $u_i$ 
\eqref{eq:flocking_controller}, we have that % such that 
$ \frac{\partial P_i}{\partial X_i} = -\sum_{(i,j)\in\mathcal{E}}\frac{\partial P_i}{\partial e_{ij}}\Vec{o}_{\mu_{ij}}$. 
Accordingly, it follows from %the time derivative of $V$ in 
\eqref{eq:v_dot_expres} that %can be rewritten as
\begin{equation}\label{eq:V_dot}
\begin{aligned}
     \dot V &= \sum_{(i,j)\in\mathcal{E}} \frac{\partial  P_i}{\partial  e_{ij}} \left(\frac{\partial e_{ij}}{\partial \mu_{ij}} \dot \mu_{ij} + \frac{\partial e_{ij}}{\partial s_{ij}} \dot s_{ij} \right)
     \\& = \sum_{(i,j)\in\mathcal{E}} \frac{\partial  P_i}{\partial  e_{ij}} \left(\dot \mu_{ij} -d_{\text{nom}} \dot s_{ij} \right)
    % \\& =  \sum_{(i,j)\in\mathcal{E}}  \frac{\partial  P_i}{\partial  \mu_{ij}} \dot \mu_{ij} + \frac{\partial P_i}{\partial s_{ij}} \dot s_{ij}
    % \\& = \sum_{(i,j)\in\mathcal{E}} \frac{\partial  P_i}{\partial  \mu_{ij}} \dot \mu_{ij} + \frac{\partial P_i}{\partial s_{ij}} \dot s_{ij}
    \\& = \sum_{(i,j)\in\mathcal{E}} \frac{\partial  P_i}{\partial  e_{ij}} \Vec{o}_{\mu_{ij}}\left(\dot X_j - \dot X_i\right) - d_{\text{nom}}  \frac{\partial  P_i}{\partial  e_{ij}}\dot s_{ij}.   
    \end{aligned}
\end{equation}
%To analyze $\dot V$, we recall that 
Since each agent shares the same orientation of North-East reference frame, %based on 
the connectivity of the undirected interconnection topology $\mathcal{G}$ (i.e., $j \in \mathcal{N}_i \Leftrightarrow  i \in \mathcal{N}_j$) implies that %, such that  
\begin{equation}\label{symmetry}
\left.\begin{matrix}
\begin{aligned}
    % \mu_{ij} &= \mu_{ji}
    % \\ s_{ij} &= s_{ji} 
     e_{ij}&= e_{ji} 
    \\ \Vec{o}_{\mu_{ij}} &= -\Vec{o}_{\mu_{ji}}
\end{aligned}
\end{matrix}\right\}.
\end{equation}
%On the other hand, the 
By hypothesis, the potential function $P_i$ %satisfies the condition that it 
is a strictly convex and differentiable function of the spacing error $e_{ij}$. Hence $\frac{\partial P_i}{\partial e_{ij}}$ is monotonically non-decreasing. % on the interval of $e_{ij}$. 
By the symmetry $e_{ij} = e_{ji}$, %it can be concluded that 
$ \frac{\partial  P_i}{\partial e_{ij}}  = \frac{\partial  P_j}{\partial e_{ji}} $ holds for all edges $(i,j)\in\mathcal{E}$. Therefore, the first term in \eqref{eq:V_dot} can be expressed as %, we have 
\begin{equation}\label{eq:term_1_1}
\begin{aligned}
    \sum_{(i,j)\in\mathcal{E}} \frac{\partial  P_i}{\partial  e_{ij}} \Vec{o}_{\mu_{ij}}\left(\dot X_j - \dot X_i\right) &=  -2\sum_{(i,j)\in\mathcal{E}} \frac{\partial  P_i}{\partial  e_{ij}} \Vec{o}_{\mu_{ij}} \dot X_i
    \\& = 2 \sum_{(i,j)\in\mathcal{E}} \frac{\partial P_i}{\partial X_i} \dot X_i. 
\end{aligned}
\end{equation}
By the definition of $X_i$, its time derivative $\dot X_i $ can be expressed as a function of %substituted by 
the controller $u_i = \sbm{v_i \\ \omega_i}$ as follows 
\begin{equation}
    \begin{aligned}
          \dot X_i & =  \nabla^2 J_{i} \sbm{ \dot x_i \\ \dot y_i}
                \\&=\nabla^2 J_{i} \sbm{\cos(\theta_i) \\ \sin(\theta_i)}v_i
                 \\& =  -\frac{K_f}{N_i} \nabla^2 J_{i}\vec{o}^{\top}_i \vec{o}_i \nabla^2 J_{i} \left(\frac{\partial P_i}{\partial X_i} \right)^\top .
    \end{aligned}
\end{equation}
Substituting this to % such that 
\eqref{eq:term_1_1} gives us 
\begin{equation}\label{eq:term_1}
     \sum_{(i,j)\in\mathcal{E}} \frac{\partial  P_i}{\partial  \mu_{ij}} \Vec{o}_{\mu_{ij}}\left(\dot X_j - \dot X_i\right)  = -\sum_{(i,j)\in\mathcal{E}}\frac{2K_f}{N_i}\left( \frac{\partial P_i}{\partial X_i}\nabla^2 J_{i} \vec{o}^{\top}_i \right)^2. 
\end{equation}
%With regards to 
For the second term in \eqref{eq:V_dot}, %$\dot s_{ij}$ in 
using \eqref{eq:sij_dot} we obtain %can be substituted into it such that
\begin{equation}
    \sum_{(i,j)\in\mathcal{E}}\frac{\partial P_{i}}{\partial  e_{ij}}\dot s_{ij} = \sum_{(i,j)\in\mathcal{E}}\frac{\partial P_{i}}{\partial  e_{ij}} \frac{\mathrm{d} s_{ij}}{ \mathrm{d} d_{ij}} \dot d_{ij}
    =\sum_{(i,j)\in\mathcal{E}} H_{ij}K_{ij},  
\end{equation}
where
\begin{equation}
    \begin{aligned}
     H_{ij} &= \frac{\partial P_{i}}{\partial  e_{ij}}\dot d_{ij}=\frac{\partial P_{i}}{\partial e_{ij}} \tanh{\left(\frac{e_{ij}}{2}\right)}
       \\ K_{ij} & = \frac{\mathrm{d}  s_{ij}}{\mathrm{d} d_{ij}}= e^{\lambda \tanh{\left(\frac{d_{\text{nom}}-d_{ij}}{2}\right)}}\frac{2\lambda e^{-(d_{ij}-d_{\text{nom}})}}{(1+e^{-(d_{ij}-d_{\text{nom}})})^2} >0.
    \end{aligned}
\end{equation}
It is clear that $K_{ij}>0$ and %the sign of 
the function $H_{ij}(e_{ij}) =\frac{\partial P_{i}}{\partial  e_{ij}}\dot d_{ij}$ with $\dot d_{ij}$ as in \eqref{eq:d_dot} is positive definite with $H_{ij}(0)=0$. %can be expressed as
%\begin{equation}\label{eq:H_ij}
%    H_{ij}(e_{ij})\left\{\begin{matrix}
%\begin{aligned} 
%>0, &\quad \text{if } e_{ij}>0\\ 
%=0, &\quad \text{if } e_{ij}=0    \\  
%>0, &\quad \text{otherwise,} %e_{ij}<0,
%\end{aligned}
%\end{matrix}\right.
%\end{equation}
%which implies that 
%or in other words, $H_{ij}\geq 0$ for all $e_{ij}$. 
Hence % and hereby 
\begin{equation}\label{eq:term_2}
  \sum_{(i,j)\in\mathcal{E}} H_{ij}K_{ij}  \geq 0
\end{equation}
holds for all $z$. 
By combining \eqref{eq:term_1} and \eqref{eq:term_2} into \eqref{eq:V_dot}, it follows that % the time derivative of Lyapunov function $V(z)$ for the multi-agent flexible flocking system is 
\begin{align}
\nonumber
%\begin{aligned}
    \dot V(z) &=-\sum_{(i,j)\in\mathcal{E}}\frac{2K_f}{N_i}\left(\frac{\partial P_i}{\partial X_i}\nabla^2 J_{i} \vec{o}^{\top}_i \right)^2 
       -d_{\text{nom}}\sum_{(i,j)\in\mathcal{E}} H_{ij}K_{ij}  
        \\
        \label{eq:V_dot_final}
        & \leq 0.
%\end{aligned}
\end{align}
% Note the control parameter is chosen small $K_f<1$ and $1-\frac{K_f}{N_i}>0$. 
It follows from this inequality that $V(z)$ is non-increasing  %monotonically decreasing $0  < V(z(t_k)) <V(z(t_0))$, and 
and consequently, the state $z_i(t) = \sbm{x_i(t) & y_i(t)  & \cos(\theta_i(t)) &\sin(\theta_i(t))}^\top$ in \eqref{eq:state_z} is bounded for all time $t\geq 0$. Therefore, La Salle's invariance principle can be applied to analyze the asymptotic convergence of the autonomous system \eqref{eq:state_z_dot} %, and the agents' states are proved to converge 
to the largest invariant set $\Omega=\left \{ z:\dot{V}(z)=0 \right \}$. For this purpose, let us introduce a new variable to simplify the first term of \eqref{eq:V_dot_final} as follows
\begin{equation}\label{eq:E_i}
       E_i :=  \frac{\partial P_i}{\partial X_i}\nabla^2 J_{i} \vec{o}^{\top}_i
      = -\frac{\partial P_i}{\partial e_{ij}}\Vec{o}_{\mu_{ij}}\nabla^2 J_{i} \vec{o}^{\top}_i.
\end{equation}
From the Lyapunov inequality \eqref{eq:V_dot_final}, the state $z\in \Omega$ must satisfy %the following conditions:
\begin{equation}
\left.\begin{matrix}
    \begin{aligned}
       H_{ij} K_{ij} &= 0  \\
         E_i & = 0, 
    \end{aligned}
    \end{matrix}\right\}
\end{equation} 
for all $(i,j)\in\mathcal E$. The equations 
%For the condition 
$H_{ij} K_{ij} = 0$ for all $(i,j)\in\mathcal{E}$ hold only when $e_{ij}=0$, %it is straightforward that the only situation is $e_{ij}=0$, given the basis that 
since $P_i$ is defined as a strictly convex function with only one minimum at $e_{ij} = 0$ (i.e., $\frac{\partial P_i}{\partial e_{ij}} = 0 $ if and only if $e_{ij}=0$). 
With regards to the second condition $ E_i  = 0$, we first note $\vec{o}_{\mu_{ij}} \neq \bold{0}_{1\times 2}$ by definition. Hence $ E_i = 0$ in \eqref{eq:E_i} can be analyzed in the following two cases. %follows with two cases below.
\begin{itemize}
    \item Case 1: $\frac{\partial P_i}{\partial e_{ij}} = 0$ for all $(i,j)\in\mathcal{E}$. Given the definition of $P_i$, this case implies that 
    % As the potential function $P_i$ is defined as a strictly convex function with only one minimum at $e_{ij} = 0$, it is straightforward that 
    all connected pairs converge to the desired inter-agent spacing $D^*_{ij}$ and $e_{ij} = 0$. 
    \item Case 2: $\frac{\partial P_i}{\partial e_{ij}} \neq 0$ and $\left(\frac{\partial P_i}{\partial X_i}\nabla^2 J_{i}
    \right) \perp \vec{o}^{\top}_i $. Based on the fact  $\left \langle  \vec{o}_i , \vec{o}^\perp_i \right \rangle = 0$, the motion variables $\sbm{v_i \\ \omega_i}$ for agent $i$ in \eqref{eq:flocking_controller} can be calculated as
    \begin{equation}
    \begin{aligned}
        v_i &= -\frac{K_f}{N_i} \vec{o}_i \nabla^2 J_{i} \left(\frac{\partial P_i}{\partial X_i} \right)^\top  = 0\\
        \omega_{i} & = \frac{K_f}{N_i} \vec{o}^\perp_i  \nabla^2 J_{i} \left(\frac{\partial P_i}{\partial X_i}\right)^\top \neq 0.
    \end{aligned}
    \end{equation}
    In this case, the dynamics of agent $i$ in the $\Omega$-limit set satisfies $ \sbm{\dot x_i \\ \dot y_i \\ \omega_i} = \sbm{0 \\ 0 \\ \omega_i}$, which corresponds to the agent being stationary with % referring that the agent's position is stationary with a 
    varying orientation $\vec{o}_i$ (or rotating). Given the fact that $P_i$ is a function of the spacing error $e_{ij}$, $\frac{\partial P_i}{\partial e_{ij}}$ remains constant so that %is constant at a fixed position with time-invariant position and therefore it is impossible to maintain 
    $\left(\frac{\partial P_i}{\partial X_i}\nabla^2 J_{i}
    \right) \perp \vec{o}^{\top}_i$ will not be maintained. Thus the positions associated to Case 2 do not belong to $\Omega$.
\end{itemize}

Therefore, the only solution of the invariant set $\Omega$ corresponds to the solution of $\frac{\partial P_i}{\partial e_{ij}} = 0$ for all $(i,j)\in\mathcal{E}$, which is given by zero %rendering each 
inter-agent spacing error $e_{ij}=0$. %As a conclusion, 
% given an initial $ d_{ij}(t_0) = d_{\text{nom}}$, 
This concludes the proof that the adaptive spacing policy-based flocking cohesion law \eqref{eq:flocking_controller} allows the adaptive scaling of the inter-agent’s gap to %adapt a proper 
guarantee the well-posedness of geometrical flocking configuration, and the convergence of all agents to the well-posed flocking configuration. %the well-posed flocking configuration. a rigid geometry where each agent pair are well-posed with a flexible spacing gap.
\end{proof}

\section{Application in Dynamic Interaction Topology}\label{sec:dynamic}
In the previous section, we consider the ASP-based flocking cohesion in the case that the undirected communication topology is static and connected for all time. In other words, there is no edge-connectedness loss during the evolution and the flocking cohesion can be achieved. However, the agent's limited sensing capability can cause switching communication topology in practical scenarios. The group splitting and fragmentation \cite{olfati2006flocking} can happen if the dynamic graph is not connected, and the flocking cohesion will fail.

To maintain the local connectivity of the switching graph and to achieve the flocking cohesion task, we present in this section the generalization of the ASP-based flocking cohesion scheme from the previous section to %can be generalized in 
the dynamic communication topology. We consider also the presence of %In addition, the 
constraints, such as, connectivity preservation and inter-agent collision avoidance. %) are considered in this section.

Suppose that each agent $i$ has an identical maximum communication range $r$ and can only interact with the neighbors within the sensing zone. Let us define the neighbor set of agent $i$ at time $t$ as
\begin{equation}\label{eq:neighbor}
    % \mathcal{N}_i(t) =  \left \{  j \neq i: c_{ij}  = r- \mu_{ij}(\nabla J_i, \nabla J_j)  > 0  \right \},
     \mathcal{N}_i(t) =  \left \{  j \neq i: c_{ij}(t) := r- \mu_{ij}(X_i(t),X_j(t))  > 0  \right \},
\end{equation}
where $\mu_{ij} $ is the gradient difference of the neighboring agents $i$ and $j$ in \eqref{eq:muij}. When the agents move around, the neighbor set $\mathcal N_i(t)$ can grow or contract, which reflects the fact that new edges to the agent $i$ are being established or removed due to the changes of proximity with respect to the other agents. Correspondingly, the interaction topology of the multi-agent unicycle system can be described by a dynamic undirected graph $\mathcal{G}(t)= (\mathcal{V},\mathcal{E}(t))$ whose edge set changes / switches %that switches 
at time sequence $\{t_{k}\}_{k\in\nline}$, where $\mathcal{E}(t) \subseteq  \left \{  (i,j): i\in \mathcal{V}, j \in \mathcal{N}_i(t) \right \}$ is the dynamic edge set. %On the contrary, w
Note that, by the definition of $\mathcal N_i(t)$, when %$(i,j) \in \mathcal{E}(t_{k})$ and $(i,j) \notin \mathcal{E}(t_{k+1})$ whenever 
at a time point $t_k$, the value $c_{ij}(t)$ in \eqref{eq:neighbor} changes its sign from positive to negative, then the pair $(i,j)$ is no longer connected and %local connectedness of this pair of agents is broken and 
the edge will be removed from the graph. Similarly, the dynamic graph allows for newly-added edges if an agent enters the sensing range of other agents during flocking. In the following definition, we will consider the connected preservation property of a dynamic undirected graph $\mathcal G(t)$ in this setting. %Before defining the graph connectivity preservation notion below, we recall that a graph $\mathcal G$ is  %in this section can be defined as follows.  %{Denote $\mathcal{T} \subset \rline_+$ as the time interval maximum switching time interval of the dynamic topology system, with the initial time instant $t_0=0$.  }
\begin{definition}
{\bf (Connectivity Preservation)} The dynamic undirected graph $\mathcal{G}(t)$ achieves {\em connectivity preservation} if $\mathcal G(t)$ is connected for all time $t\geq 0$. %is called connected at time $t$ if there is a path between any two distinct nodes. The graph connectivity preservation is achieved if $\mathcal{G}(t)$ is connected for all time $t\in\mathcal{T}$.
\end{definition}

In the following, we will analyze the performance of our proposed ASP-based flocking cohesion in this dynamic graph setting. In particular, we show as well the graph connectivity preservation property of the closed-loop systems. Using the ASP design in \eqref{eq:sij}-\eqref{eq:d_dot}, we need to introduce an additional constraint on the scaling parameter $\lambda$ in \eqref{eq:sij} that depends on the maximum communication range $r$. 
%Compared with the algorithm in the static graph, we remain the design of the spacing policy \eqref{eq:sij}-\eqref{eq:d_dot} and propose one more rule on the scaling parameter $\lambda$ based on the maximum communication range $r$. 
%In order to incorporate the connectedness constraint and to 
For generalizing the ASP-based flocking cohesion to the dynamic interaction topology, 
%verify that the proposed ASP-based flocking cohesion can be generalized in dynamic interaction topology, 
we assume additional conditions on the potential function $P_i$ as considered in Subsection \ref{sec:Flocking_Spacing}, namely, % is redefined in addition to the essential conditions in Section \ref{sec:ASP_static}. In specific, 
$P_i$ is a nonnegative, differentiable function of $e_{ij}$ and $s_{ij}$ with $(i,j)\in\mathcal{E}$, such that
\begin{enumerate}
    \item $P_i\rightarrow 0$ as $e_{ij}:=\mu_{ij} - d_{\text{nom}}s_{ij} \rightarrow 0 $;
    \item $P_i \rightarrow \infty$ as $c_{ij}\rightarrow  0$ or $\mu_{ij}\rightarrow  0$,
\end{enumerate}
where $c_{ij}$ is as defined in the dynamic neighbor set \eqref{eq:neighbor}. An example for such potential function is 
\begin{equation}\label{eq:potential_constr}
        P_i = \frac{1}{2}\sum_{j\in\mathcal{N}_i} e^2_{ij}\left [\left(\text{ln} \left(c_{ij}\mu_{ij}\right) \right)^2  +1\right].
    \end{equation}
Note that the error dynamics and the adaptive spacing policy remain %are defined 
the same as in \eqref{eq:D*} and \eqref{eq:eij}. % then the following theorem can be proposed:
\begin{theorem}\label{thm:dynamic}
    Consider a multi-agent unicycle system \eqref{eq:unicycle_model}, which is initialized in a connected graph %in an initially connected graph 
    $\mathcal{G}(t_0)$ and is driven by the adaptive spacing policy-based flocking cohesion algorithm \eqref{eq:flocking_controller} with the scaling multiplier $s_{ij}$ be as in \eqref{eq:sij} satisfying $0<\lambda \leq \ln{\frac{r}{d_{\text{nom}}}}$. %Suppose the state-dependent interaction topology switches at time $t_{k=0,1,2,...}$, t
    Then the multi-agent unicycle system achieves connectivity preservation and flexible flocking cohesion. 
%    Denote $\Gamma$ as a set of state $z$ for the newly-defined system \eqref{eq:state_z_dot} where the corresponding state-dependent graph is connected:
%          \begin{equation}    
%                \Gamma =\left \{ z\in\mathbb{R}^4 \mid \mathcal{G}(z) \, \text{is connected} \right \},    
%          \end{equation}
%          then $\Gamma$ is forward invariant for the closed-loop system \eqref{eq:state_z_dot}, and the dynamic communication topology $\mathcal{G}(z(t))$ is connected for all time $t \geq 0$ such that the graph connectivity is maintained during flocking evolution. 
\end{theorem}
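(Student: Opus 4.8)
## Proof Proposal for Theorem~\ref{thm:dynamic}

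The plan is to mirror the Lyapunov/LaSalle argument of Theorem~\ref{thm:static} but with the connectivity-preserving potential \eqref{eq:potential_constr}, and to extract the connectivity-preservation claim from the boundedness of the Lyapunov function along switching times. First I would set up the same extended-state representation \eqref{eq:state_z} and the candidate function $V(z) = \sum_{i\in\mathcal V} P_i + \tfrac12\sum_{i\in\mathcal V}z_{i3}^2 + \tfrac12\sum_{i\in\mathcal V}z_{i4}^2$, now with $P_i$ satisfying the two new conditions (vanishing as $e_{ij}\to 0$, blowing up as $c_{ij}\to 0$ or $\mu_{ij}\to 0$). On each interval $[t_k,t_{k+1})$ where the graph is fixed, the computation of $\dot V$ is \emph{verbatim} the one in \eqref{eq:V_dot}--\eqref{eq:V_dot_final}: the symmetry relations \eqref{symmetry} still hold (same NE frame), $\partial P_i/\partial e_{ij} = \partial P_j/\partial e_{ji}$, and the same two nonpositive terms appear, giving $\dot V(z)\le 0$ on each interval.

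The crux — and the step I expect to be the main obstacle — is handling the switching instants $\{t_k\}$, and this is exactly where the hypothesis $0<\lambda\le\ln(r/d_{\text{nom}})$ is used. When an edge $(i,j)$ is \emph{removed} at $t_k$, the term $P_i^{(i,j)}+P_j^{(i,j)}$ is dropped from $V$; since this term blows up as $c_{ij}\to 0$, an edge removal would force $V(t_k^-)=+\infty$, contradicting boundedness of $V$ on the preceding interval — hence no edge is ever removed, i.e. existing edges are preserved. I must argue carefully that $V$ stays finite up to $t_k^-$: because $\dot V\le 0$ on $[t_{k-1},t_k)$, $V$ is bounded by $V(t_{k-1})$ on that interval, so $P_i^{(i,j)}$ cannot diverge, so $c_{ij}$ stays bounded away from $0$ on $[t_{k-1},t_k)$, so in fact $\mu_{ij}(t)<r$ persists and the edge never actually leaves. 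The role of $\lambda\le\ln(r/d_{\text{nom}})$ is to keep the \emph{desired} gap admissible: from \eqref{eq:sij}, $s_{ij}<e^{\lambda}\le r/d_{\text{nom}}$, so $D^*_{ij}=d_{\text{nom}}s_{ij}<r$, which guarantees that the target length the controller is driving $\mu_{ij}$ toward lies strictly inside the communication range; otherwise the potential could push $\mu_{ij}$ out through $r$ and the blow-up barrier would be fighting the equilibrium. For edge \emph{additions} at $t_k$, a new nonnegative term $P_i^{(i,j)}(t_k)$ is added to $V$; since $c_{ij}(t_k)=r-\mu_{ij}(t_k)>0$ and $\mu_{ij}(t_k)=r>0$ hold strictly at the moment of addition (a new neighbor enters at $\mu_{ij}$ just below $r$, and $\mu_{ij}>0$ by collision avoidance which follows from the same barrier), the added term is finite, so $V(t_k^+)<\infty$. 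I would then note that only finitely many additions are possible (at most $\binom{N}{2}$ edges), so after some finite time the graph stops switching, and on the final interval $V$ is again monotone.

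Having established (i) no edge removals, hence connectivity preservation since $\mathcal G(t_0)$ is connected and edges only accumulate; and (ii) $V$ is bounded and ultimately nonincreasing with $\dot V\le 0$, I would invoke LaSalle's invariance principle on the eventually-autonomous closed loop, exactly as in Theorem~\ref{thm:static}: the largest invariant set in $\{\dot V=0\}$ forces $H_{ij}K_{ij}=0$ and $E_i=0$ for all edges. Strict convexity of $P_i$ with its unique minimum at $e_{ij}=0$ gives $e_{ij}=0$ from the first condition; the Case~1/Case~2 dichotomy for $E_i=0$ rules out the rotating-in-place configurations for the same reason as before (if $\partial P_i/\partial e_{ij}\neq 0$ the agent would be stationary yet $(\partial P_i/\partial X_i)\nabla^2 J_i\perp \vec o_i^\top$ cannot persist while $\partial P_i/\partial e_{ij}$ is frozen). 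Therefore $\lim_{t\to\infty}\sum_{(i,j)\in\mathcal E}e_{ij}=0$, i.e. flexible flocking cohesion is achieved, while connectivity preservation holds by construction. The one technical point warranting care is that $\mu_{ij}>0$ (no inter-agent collision) must be extracted from the $P_i\to\infty$ as $\mu_{ij}\to 0$ branch together with boundedness of $V$ on each interval, so that the logarithmic terms $\ln(c_{ij}\mu_{ij})$ in \eqref{eq:potential_constr} are well-defined throughout.
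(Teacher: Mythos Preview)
Your proposal is correct and follows essentially the same approach as the paper: the same Lyapunov function and $\dot V\le 0$ computation on each fixed-graph interval, the barrier/blow-up property of $P_i$ together with boundedness of $V$ to preclude edge removal (and collision), the bound $\lambda\le\ln(r/d_{\text{nom}})$ to keep $D^*_{ij}<r$, finiteness of the switching sequence since edges only accumulate, and LaSalle on the eventually-fixed graph exactly as in Theorem~\ref{thm:static}. Your treatment of the jump in $V$ at edge-addition instants is in fact more explicit than the paper's, which handles that point only implicitly.
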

\begin{proof}
    Following the same proof as in the Theorem \ref{thm:static}, %Given the same multi-unicycle system, 
    the extended-state dynamics and Lyapunov function can be given by \eqref{eq:state_z} and \eqref{eq:Lyapunov}. As the dynamic communication topology $\mathcal{G}(t)$ switches at the time sequences $\{t_{k}\}_{k\in\nline}$, we thus calculate the derivative of $V(z)$ with respect to the time $t \in\left [ t_k, t_{k+1}\right)$ as:
\begin{equation}\label{eq:V_dot_conn}
\begin{aligned}
     \dot V(z) & =  \sum_{i\in\mathcal{V}} \dot P_i + \underbrace{\sum_{i\in\mathcal{V}} \left( -\cos(\theta_i)\sin(\theta_i)\dot \theta  + \cos(\theta_i)\sin(\theta_i)\dot \theta \right)}_{=0}
     \\& = \sum_{(i,j)\in\mathcal{E}} \frac{\partial  P_i}{\partial  \mu_{ij}}\dot \mu_{ij} + \frac{\partial  P_i}{\partial s_{ij}}\dot s_{ij} 
\end{aligned}
\end{equation}
Similar to the proof of Theorem \ref{thm:static}, $P_i$ is defined as a differentiable function that reaches its minimum at $e_{ij} = \mu_{ij}-d_{\text{nom}} s_{ij}=0$ and the final expression of $\dot V(z)$ is the same form as \eqref{eq:V_dot_final}.
As $V(z)$ includes the sum of the potentials for each connected agent pair $(i,j)\in\mathcal{E}(t)$, the positive-definite and boundedness properties of $V(z)$ imply that $P_i$ is bounded for each agent $i$ at all time $t \in\left [ t_k, t_{k+1}\right)$. Therefore, $c_{ij}\rightarrow 0$ or $\mu_{ij}\rightarrow 0$ can not happen according to the second rule of the potential function construction (as $P_i \rightarrow \infty$ in this case), and it implies that the connected agents will never exceed the sensing range $r$ or collide with each other during this time interval. Therefore, the connected edge would be maintained at the switching time such that $(i,j)\in\mathcal{E}(t_k) \subseteq \mathcal{E}(t_{k+1})$.

Correspondingly, the same argument and boundedness property can be applied to any newly-added edge during flocking evolution where $(i,j)\notin \mathcal{E}(t_{k-1})$ and $(i,j)\in \mathcal{E}(t_{k})$. Once the new edge is connected, it would be maintained for the next time interval such that $(i,j)\in\mathcal{E}(t_{k}) \subseteq \mathcal{E}(t_{k+1})$ and the neighbor set $\mathcal{N}_i(t_{k})\subseteq \mathcal{N}_i(t_{k+1}) $. Therefore, if $\mathcal{G}(t_k)$ is connected, the switching communication graph $\mathcal{G}(t_{k+1}) = \left\{ \mathcal{V}, \mathcal{E}(t_{k+1})\right\}$ is guaranteed to be connected. The same argument can be applied to the initially connected graph $\mathcal{G}(t_0)$ for connectivity maintenance for all time $t\geq t_0$. 

% Given the definition of neighbor set in \eqref{eq:neighbor} and the fact that the gradient is position related (i.e., $X_i:=\nabla J(x_i,y_i)$), we know the dynamic graph is state dependent (i.e., $\mathcal{G}(z(t))$). As set $\Gamma$ defines the states where the corresponding graphs are connected, it is obvious that $z(t_{k+1})\in \Gamma$ if  $z(t_{k})\in \Gamma$ in this switching time interval, rendering the set $\Gamma$ is forward invariant for the system \eqref{eq:state_z_dot}. Thus, the connectivity of dynamic graph $\mathcal{G}(z(t))$ can be maintained for all $t\geq 0$ if it is initially connected (i.e., $z(t_0)\in\Gamma$ at $t_0=0$).

In addition, as the agents remain in the maximum interaction range (i.e., $c_{ij}(t) = r -\mu_{ij}(t)>0, \forall t\geq t_0$), the desired inter-agent space should be set as $0<D^*_{ij}<r$. With the formulation of $s_{ij}$ and its variable $\dot d_{ij}$ in \eqref{eq:sij} and \eqref{eq:d_dot},
the desired space is constrained as  $D^*_{ij}=d_{\text{nom}}s_{ij} \in \left (d_{\text{nom}}e^{-\lambda} ,d_{\text{nom}}e^{\lambda}\right ) $ with a positive nominal value $d_{\text{nom}}$. To satisfy the constraints of connectivity preservation, the scaling multiplier $\lambda$ should be set up with the condition $d_{\text{nom}}e^{\lambda} \leq r$ which refers to $0<\lambda \leq \ln{\frac{r}{d_{\text{nom}}}}$. 

{Given the finite number of agents $\left | \mathcal{V} \right|$ and the connectivity maintenance property, the switching times of the dynamic graph $\mathcal{G}(t) = \left\{\mathcal{V}, \mathcal{E}(t)\right\}$ is correspondingly finite and will converge to a fixed graph $\mathcal{G}(t_p)$ at time instance $t_p>t_0$. 
For analyzing the flocking cohesion convergence, we can consider the time interval $t\in\left [ t_p, +\infty \right)$, where the communication graph $\mathcal{G}(t)=(\mathcal{V},\mathcal{E}(t_p))$ is fixed and time-invariant. In this case, the LaSalle invariant principle can be applied, and the convergence analysis follows the proof of the largest invariant set $\Omega=\left \{ z:\dot{V}(z)=0 \right \}$ in Theorem \ref{thm:static}.} This concludes the proof.
\end{proof}

Accordingly, with the same setup as in Theorem \ref{thm:dynamic}, we have the following technical lemmas.
\begin{lemma}\label{lem:collision}
  Given the potential function $P_i$ in \eqref{eq:potential_constr}, the inter-agent collision avoidance can be guaranteed during the flocking evolution.
\end{lemma}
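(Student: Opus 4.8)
The plan is to piggyback on the Lyapunov estimate already derived in the proof of Theorem~\ref{thm:dynamic} and to check that the concrete potential \eqref{eq:potential_constr} has the required barrier behaviour at $\mu_{ij}=0$. Recall that in that proof we obtained $\dot V(z)\le0$ on every inter-switching interval, so $V(z(t))$ stays bounded and, because $V$ in \eqref{eq:Lyapunov} is a sum of nonnegative terms, each agent potential $P_i$ remains bounded along the closed-loop trajectory. It was also observed there that boundedness of $P_i$, together with construction property~2 ($P_i\to\infty$ as $c_{ij}\to0$ or $\mu_{ij}\to0$), rules out $\mu_{ij}\to0$. Hence the only genuinely new task here is to verify that the particular choice \eqref{eq:potential_constr} really satisfies property~2 on the branch $\mu_{ij}\to0$.

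For that I would fix a connected pair $(i,j)\in\mathcal E(t)$ and isolate its contribution $\tfrac12 e_{ij}^2\big[(\ln(c_{ij}\mu_{ij}))^2+1\big]$ to $P_i$. Since the edge is present, $0<\mu_{ij}<r$, so $c_{ij}=r-\mu_{ij}\to r>0$ as $\mu_{ij}\to0^+$, hence $c_{ij}\mu_{ij}\to0^+$ and $(\ln(c_{ij}\mu_{ij}))^2\to+\infty$. Meanwhile $e_{ij}=\mu_{ij}-d_{\text{nom}}s_{ij}$ and, by the ASP bound \eqref{eq:sij}, $d_{\text{nom}}s_{ij}>d_{\text{nom}}e^{-\lambda}>0$; therefore whenever $\mu_{ij}<\tfrac12 d_{\text{nom}}e^{-\lambda}$ we have $e_{ij}^2\ge\tfrac14 d_{\text{nom}}^2e^{-2\lambda}>0$, a positive lower bound that does not depend on $\mu_{ij}$. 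Consequently the $(i,j)$-term, and thus $P_i$, diverges to $+\infty$ as $\mu_{ij}\to0^+$, which is exactly property~2.

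Combining the two facts finishes the proof by contradiction: if some connected pair were to approach $\mu_{ij}=0$, the bounded quantity $P_i$ would be forced to diverge; and pairs that never become neighbours satisfy $\mu_{ij}\ge r>0$ trivially, so \eqref{task:collision} holds for all $(i,j)\in\mathcal E(t)$ and all $t\ge t_0$. I expect the delicate point to be the product structure of the barrier in \eqref{eq:potential_constr}: the prefactor $e_{ij}^2$ could a priori vanish at the same rate the logarithm blows up, and ruling this out is precisely where the strict positivity of the adaptive gap $D^*_{ij}=d_{\text{nom}}s_{ij}\ge d_{\text{nom}}e^{-\lambda}>0$ from \eqref{eq:sij} is used, since it keeps $|e_{ij}|$ bounded away from $0$ near $\mu_{ij}=0$.
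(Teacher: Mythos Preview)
Your proposal is correct and follows essentially the same barrier-function argument as the paper: boundedness of $P_i$ (inherited from the Lyapunov analysis in Theorem~\ref{thm:dynamic}) together with $P_i\to\infty$ as $\mu_{ij}\to0$ forces $\mu_{ij}$ to stay positive for every connected pair, while non-neighbours are automatically separated by at least $r$. Your additional step---checking that the prefactor $e_{ij}^2$ in \eqref{eq:potential_constr} is bounded away from zero near $\mu_{ij}=0$ via the ASP lower bound $D^*_{ij}\ge d_{\text{nom}}e^{-\lambda}>0$---makes the verification of property~2 explicit, a point the paper's proof simply takes for granted.
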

\begin{proof}
    Note the inter-agent collision corresponds only to %considers 
    pairs of agents moving in a close range. The flocking group can be considered as \emph{collision-free} if the connected agents are collision-free. As proved in Theorem \ref{thm:dynamic} that the potential function $P_i$ is bounded, we have $\mu_{ij}(t) \neq 0$ for any connected agents over the time $t\geq t_0$. Thus, inter-agent collision avoidance can be guaranteed for the whole group flocking (i.e. $\mu_{ij}(t)\neq 0$, $j\neq i \in\mathcal{V}$, $\forall t \geq t_0$).
\end{proof}

% \begin{lemma}
%     Based on the newly-established $P_i$, the flexible flocking cohesion can be realized with the state-dependent adaptive spacing policy-based controller \eqref{eq:flocking_controller} in the dynamic communication graph $\mathcal{G}(t)$.
% \end{lemma}

\section{Simulation Results}\label{sec:simulation}
For the simulation setup demonstrating the efficacy of our proposed algorithm, a group of five unicycle agents with an initially connected graph $\mathcal{G}$ is considered in {two different scalar fields $\mathbb{R}^2$ %where the signal distributions are in the form of 
$J(x,y)= - x^2 - y^2$ and $J(x,y) = -y^3-2(x^2+y)$, respectively.} Each agent $i$ can only interact with its connected neighbor agents $j\in\mathcal{N}_i$ where they exchange their local field gradient measurements $X_i:=\nabla J_{i\in\mathcal{V}}$. For numerical simulation, we consider $d_{\text{nom}}=2$. %, the control object is to form a flexible flocking cohesion where each agent is steered to approach a desired adaptive gradient difference $D^*_{ij}$ with its neighbors. 
In the case of a dynamic graph $\mathcal{G}(t)$, the agents have identical maximum sensing range $r=10$. % and aim to maintain the communication connection and avoid inter-agent collisions in the flocking evolution.

In the first simulation result, a static undirected communication topology $\mathcal{G}$ is considered that corresponds to the multi-agent system setup in Theorem \ref{thm:static}. Figure \ref{static_nonq:init} shows a complete graph $\mathcal G$ and initial position of each agent, where the connected edges are shown in light-blue. %Note that $\mathcal{G}$ can be any connected graph and we demonstrate a complete one in this static-graph flocking simulation. 
It is clear that the five agents in a complete graph can not form a configuration where each pair of inter-agent is equally distanced, and therefore it requires flexible flocking cohesion control approach. Using the ASP-based flocking controller, all agents eventually converge to a rigid geometry as shown in Figure \ref{static_nonq:flex_final} where each agent's trajectories are denoted in distinct colors, and the corresponding spacing errors are plotted in Figure \ref{static_nonq:flex_error}. {As a comparison, Figure \ref{static_nonq:stand_final} demonstrates the standard flocking cohesion results with preset fixed inter-agent space $d_{\text{nom}}$. This final configuration is widely accepted as \textit{quasi $\alpha$-lattice} where each inter-agent space converges to a close range to $d_{\text{nom}}$ as shown in Figure \ref{static_nonq:stand_error}. } {According to the deviation energy function $E$ for describing the degree in which a configuration differs from a standard $\alpha$-\textit{Lattice} in \cite{olfati2006flocking}:
\begin{equation}
    E = \frac{1}{\left | \mathcal{E} \right |+1}\sum_{(i,j)\in\mathcal{E}}(\mu_{ij}-d_{\text{nom}})^2,
\end{equation}
the mean-square-errors of the final configuration to the desired $d_{\text{nom}}$ value in every edge are $E_f =0.57$ and $E_s=0.33$, for the ASP-based flexible flocking controller and standard flocking cohesion controller, respectively. The results are plotted in Figure \ref{static_nonq:e_mean} in blue and red lines. Both of them show that $E\leq \varepsilon ^2 d_{\text{nom}}^2$ with $\varepsilon \ll 1$, and it shows that the proposed flexible flocking can also achieve a low-deviation-energy configuration of standard $\alpha$-\textit{Lattice} while coordinating the agent pair to an adaptive space. In particular, flexible flocking control approach results in a faster convergence speed than %compared with 
the standard one as seen in Figure \ref{static_nonq:e_mean}. In order to demonstrate the convergence rate, a mean square error with respect to the ASP-based inter-agent space $D^*_{ij}$ in the flexible flocking method is additionally plotted in orange, where we define 
\begin{equation}
    E_{ASP} = \frac{1}{\left | \mathcal{E} \right |+1}\sum_{(i,j)\in\mathcal{E}}(\mu_{ij}-D^*_{ij})^2.
\end{equation}
% while without uncertainty $\delta$ on the edge coordination as the \textit{quasi $\alpha$-Lattice} (i.e., $-\delta \leq\mu_{ij}-d_{\text{nom}} \leq \delta, \quad \forall(i,j)\in\mathcal{E}$). 
In other words, the strategy of introducing adaptive inter-agent spacing policy to the flocking cohesion not only allows the multi-agent systems to fastly reach an admissible configuration that is close to the desired one ($\alpha$-\textit{Lattice}), but also provides flexibility for the agent motion and geometry organization. Note this property can be crucial for various tasks (e.g., obstacle avoidance, large-scale coordination, adaptive cruise control (ACC) systems, lane change, etc.).} 
%Using the proposed ASP-based flocking controller, the mean-square-error of the final configuration to the desired $d_{\text{nom}}$ value in every edge is $0.94$, which is smaller than the $\alpha$-lattice configuration using the standard flocking cohesion control law with the mean-square-error of $0.41$ (-1). This shows that introducing adaptive inter-agent spacing policy to the flocking cohesion allows the multi-agent systems to reach an admissible configuration that is close to the desired one. 

  \begin{figure*}[htbp]
	\centering
         \subfigure[]{
		\begin{minipage}[t]{0.3\textwidth}
		    \label{static_nonq:init}
			\centering			
                \includegraphics[width=1\textwidth]{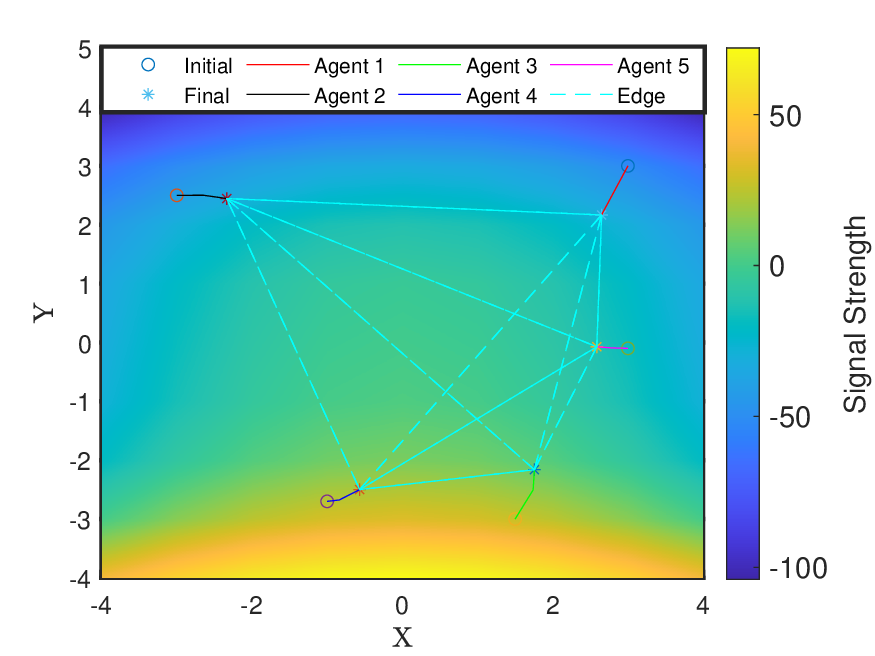}
                \end{minipage}%
	}%
	\subfigure[]{
		\begin{minipage}[t]{0.3\textwidth}
		 \label{static_nonq:flex_final}
			\centering			
                \includegraphics[width=1\textwidth]{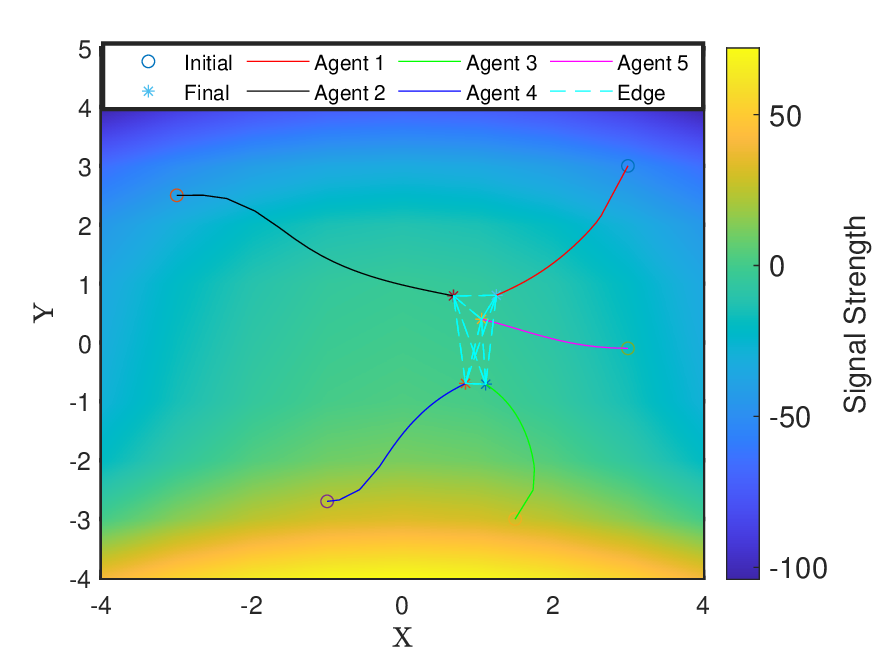}
		\end{minipage}%
	}%
        \subfigure[]{
		\begin{minipage}[t]{0.3\textwidth}
		 \label{static_nonq:flex_error}
			\centering			
                \includegraphics[width=1\textwidth]{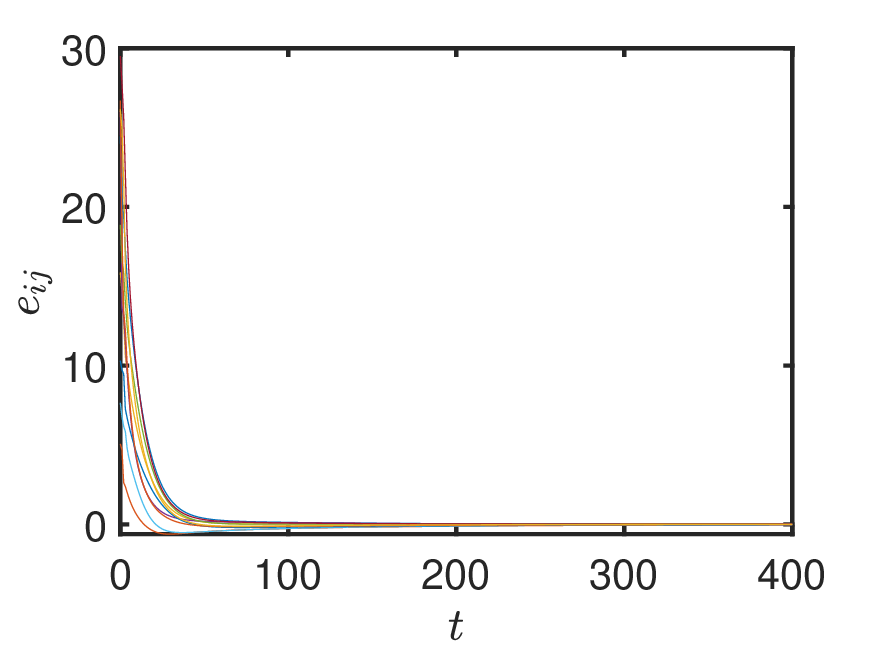}
		\end{minipage}%
	}%

  \subfigure[]{
		\begin{minipage}[t]{0.3\textwidth}
		 \label{static_nonq:stand_final}
			\centering			
                \includegraphics[width=1\textwidth]{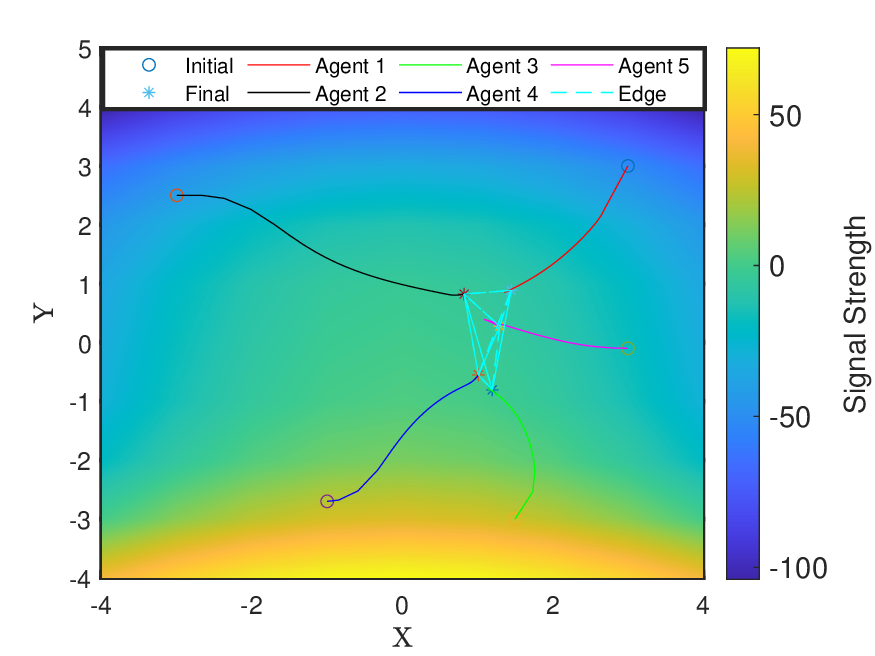}
		\end{minipage}%
	}%
     \subfigure[]{
		\begin{minipage}[t]{0.3\textwidth}
		 \label{static_nonq:stand_error}
			\centering			
                \includegraphics[width=1\textwidth]{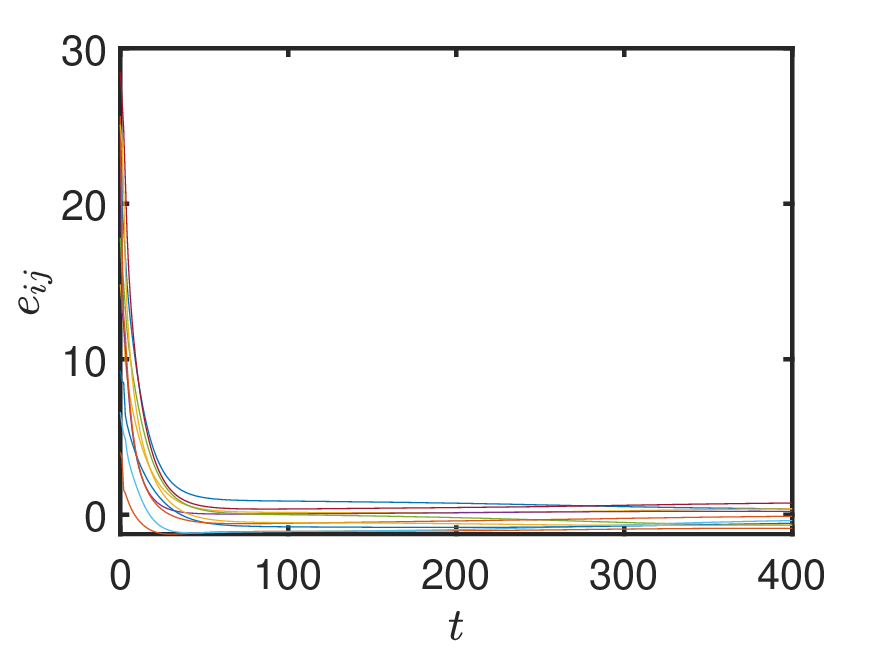}
		\end{minipage}%
	}%
  \subfigure[]{
		\begin{minipage}[t]{0.3\textwidth}
		 \label{static_nonq:e_mean}
			\centering			
                \includegraphics[width=1\textwidth]{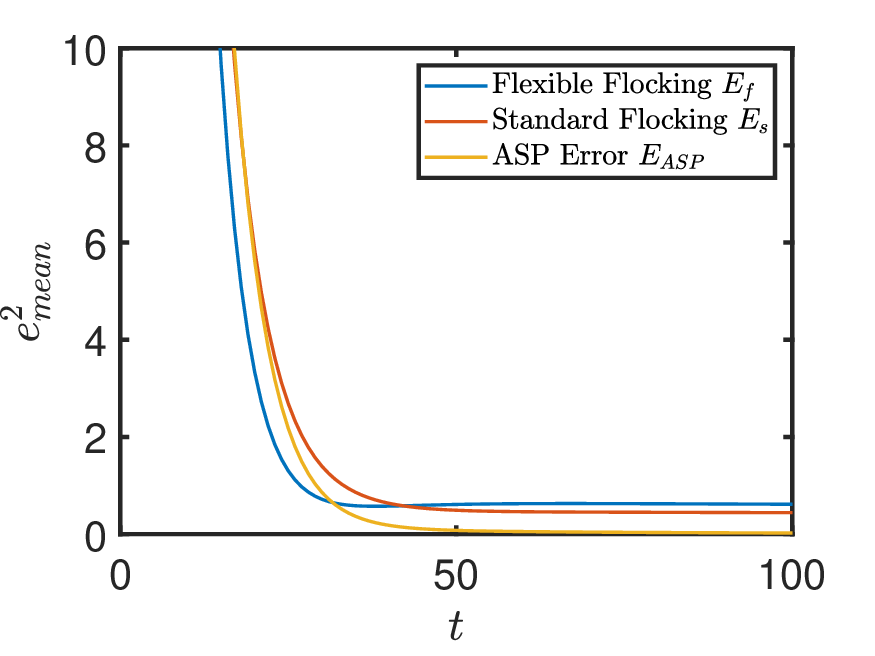}
		\end{minipage}%
	}%

	\centering
 \caption{Comparision of the flocking results of multi-unicycle system ($N=5$) with static communication graph $\mathcal{G}$ in the field $J(x,y) = -y^3-2(x^2+y)$. (a) Initial complete graph. (b)-(c) ASP-based flexible flocking configuration and spacing error $e_{ij} = \mu_{ij}-D^*_{ij}$. The parameters are set as $\lambda=1$ and $d_{ij}(t_0) = 0$.  (d)-(e) Standard flocking with fixed inter-agent space $d_{\text{nom}}=2$ and the spacing error $e_{ij} = \mu_{ij}-d_{\text{nom}}$. {(f) Mean-square-spacing-error comparison. Note $E_f$ and $E_s$ are with $ E = \frac{1}{\left | \mathcal{E} \right |+1}\sum_{(i,j)\in\mathcal{E}}(\mu_{ij}-d_{\text{nom}})^2$, and $E_{ASP} =\frac{1}{\left | \mathcal{E} \right |+1}\sum_{(i,j)\in\mathcal{E}}(\mu_{ij}-D^*_{ij})^2.$} }
	\label{fig:static_nonq}
\end{figure*}

  \begin{figure*}[htbp]
	\centering
         \subfigure[]{
		\begin{minipage}[t]{0.3\textwidth}
		    \label{dynamic_non:init}
			\centering			
                \includegraphics[width=1\textwidth]{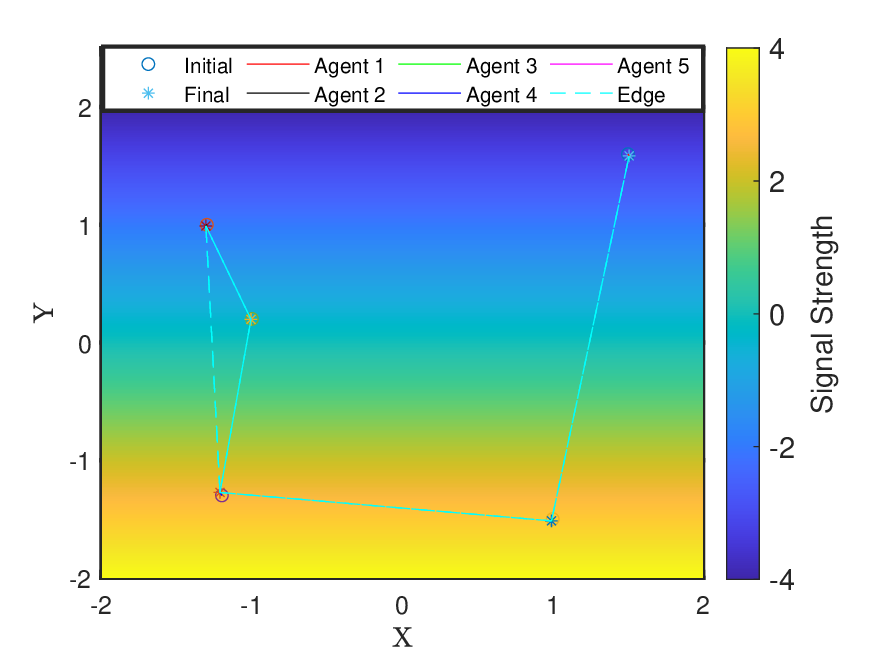}
                \end{minipage}%
	} 
	\subfigure[]{
		\begin{minipage}[t]{0.3\textwidth}
		 \label{dynamic_non:flex_final}
			\centering			
                \includegraphics[width=1\textwidth]{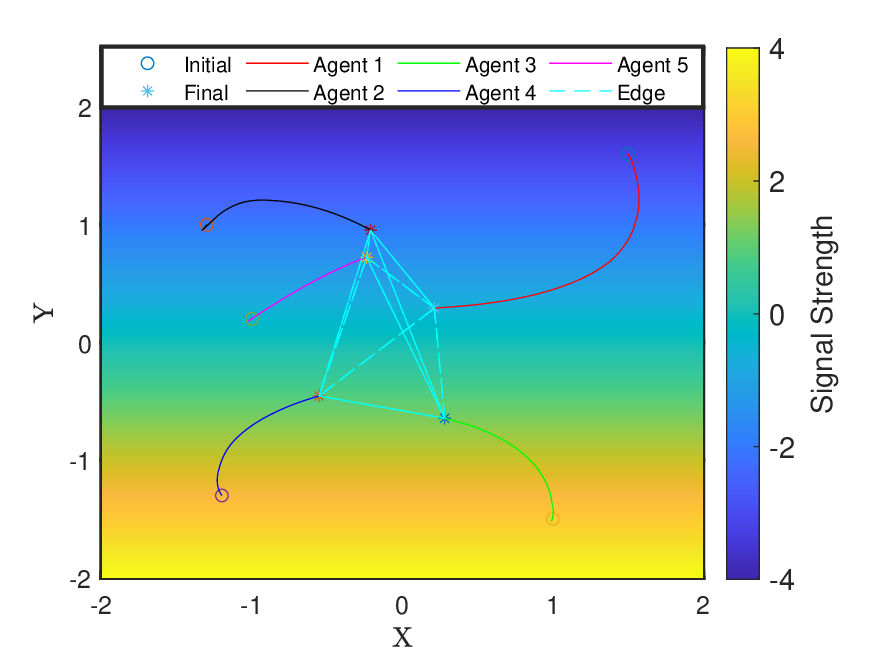}
		\end{minipage}%
	}%
        \subfigure[]{
		\begin{minipage}[t]{0.3\textwidth}
		 \label{dynamic_non:flex_error}
			\centering			
                \includegraphics[width=1\textwidth]{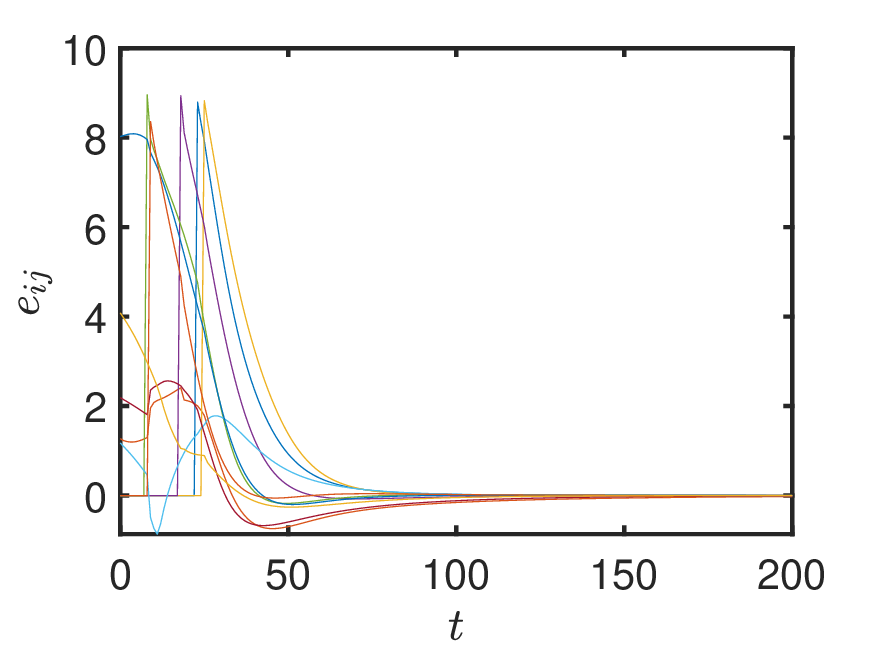}
		\end{minipage}%
	}%

     \subfigure[]{
		\begin{minipage}[t]{0.3\textwidth}
		 \label{dynamic_non:flex_muij}
			\centering			
                \includegraphics[width=1\textwidth]{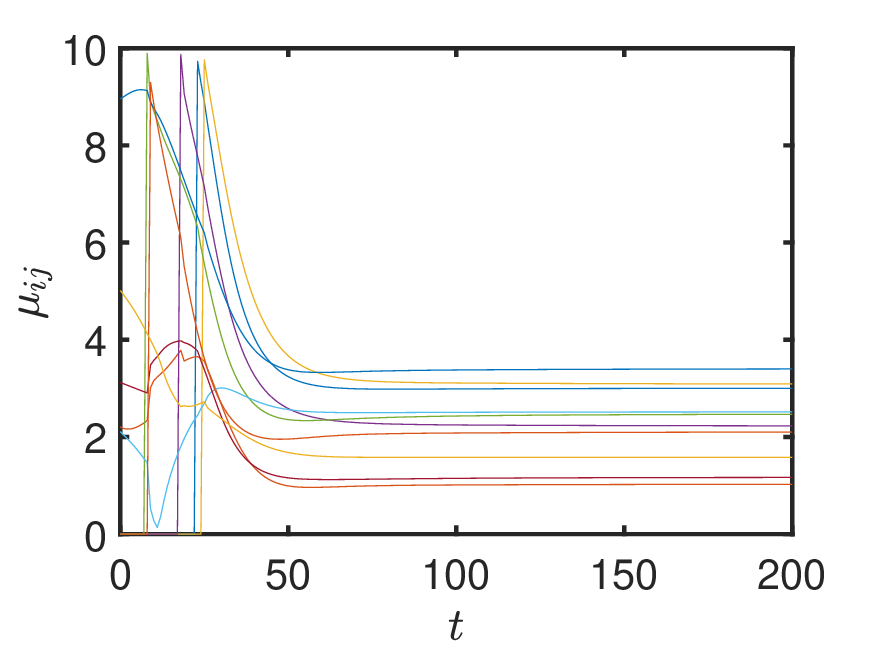}
		\end{minipage}%
	}%
     \subfigure[]{
		\begin{minipage}[t]{0.3\textwidth}
		 \label{dynamic_non:flex_sij}
			\centering			
                \includegraphics[width=1\textwidth]{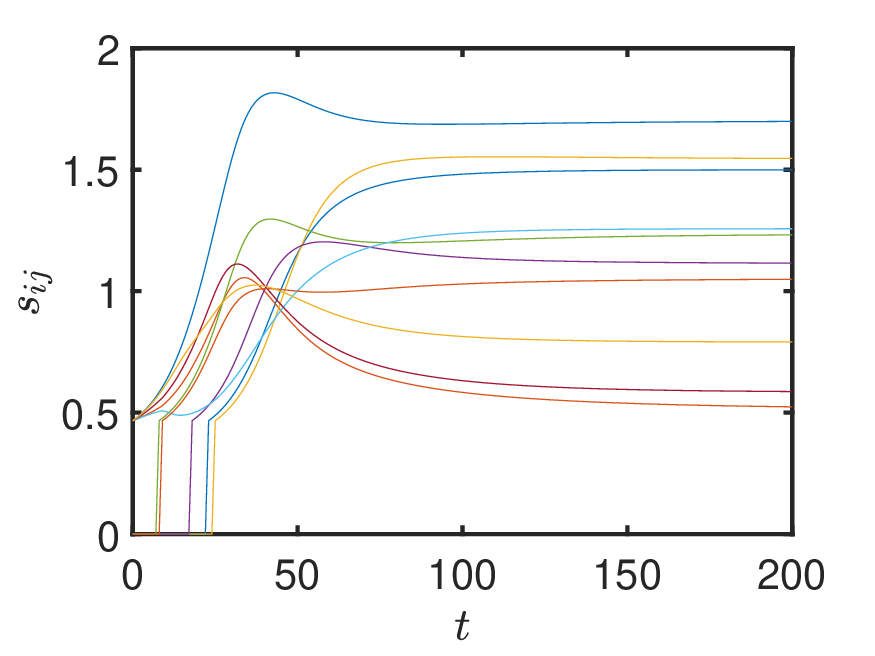}
		\end{minipage}%
	}%

	\centering
     \caption{Flexible flocking results of multi-agent system with dynamic communication graph $\mathcal{G}(t)$ where the maximum interaction range is set as $r=10$ for all agents. (a) Initial incomplete graph. (b) Final flocking configuration. (c) Inter-agent spacing error $e_{ij} = \mu_{ij}-D^*_{ij}$. (d) Inter-agent space $\mu_{ij}$. (e) Scaling multiplier $s_{ij}$.}
	\label{fig:dynamic_non_flex}
\end{figure*}

Let us now evaluate the proposed ASP flocking cohesion algorithm as in Theorem \ref{thm:dynamic} % To validate the flexible flocking efficacy 
in a dynamic graph $\mathcal{G}(t)$. In this case, we use a connected and incomplete graph for the initial graph $\mathcal G(0)$ as shown in Figure \ref{dynamic_non:init}. Particularly, the nodes within the maximum interaction range $r$ are considered connected and the corresponding edges are shown in light-blue. During the flocking evolution, the spacing error $e_{ij}$ is only drawn for the connected edge $(i,j)\in\mathcal{E}(t)$ at each switching time
% , and it would be given $e_{ij}=0$ for the unconnected ones 
(see the initial flat states in Figure \ref{dynamic_non:flex_error}). It can be seen that the initially unconnected agents started entering the interaction area around time $t\approx 7$ (as shown by the jumps in Figure \ref{dynamic_non:flex_error} where the corresponding scaling parameter $s_{ij}$ becomes active in Figure \ref{dynamic_non:flex_sij}). The trajectories of agents are shown in Figure \ref{dynamic_non:flex_final}, where the final interaction topology is fixed and complete, rendering the flocking group convergences to a rigid geometry (see the spacing error convergence in Figure \ref{dynamic_non:flex_error}). The corresponding desired spacing value $D^*_{ij}$ differs from each pair of connected edges by various $s^*_{ij}$ in Figure \ref{dynamic_non:flex_sij}. Figure \ref{dynamic_non:flex_muij} shows that there is no inter-agent collision and no breaking of connectivity as $ 0<\mu_{ij}(x,y)<r$ for all $i\neq j \in\mathcal{V}$ during the motion. This verifies the efficacy of Theorem \ref{thm:dynamic} and Lemma \ref{lem:collision}. 

% {\color{red}Similar to the simulation results for the static graph as before, %given the same setup, 
% we plot the simulation results without adaptive spacing policy (ASP) in Figure \ref{fig:dynamic_non_stand}, which results in a larger range of spacing error as shown in Figure \ref{dynamic_non:stand_error}. }

%  \begin{figure*}[htbp]
% 	\centering
%          \subfigure[]{
% 		\begin{minipage}[t]{0.3\textwidth}
% 		    \label{dynamic_non:stand_final}
% 			\centering			
%                 \includegraphics[width=1\textwidth]{fig/non_quadratic/standard/dynamic/final_graph.eps}
%                 \end{minipage}%
% 	} 
% 	\subfigure[]{
% 		\begin{minipage}[t]{0.3\textwidth}
% 		 \label{dynamic_non:stand_error}
% 			\centering			
%                 \includegraphics[width=1\textwidth]{fig/non_quadratic/standard/dynamic/eij.eps}
% 		\end{minipage}%
% 	}%
% 	\centering
%      \caption{Standard flocking results with controller \eqref{eq:flocking_controller} and fixed inter-agent space $d_{\text{nom}}$. (a) Flocking trajectories. (b) Inter-agent spacing error w.r.t. the nominal value $e_{ij} = \mu_{ij}-d_{\text{nom}}$.}
% 	\label{fig:dynamic_non_stand}
% \end{figure*}

\section{Conclusion}\label{sec:conclusion}
This paper presents an adaptive spacing policy for %general coordination problems, and further designed a 
solving distributed flexible flocking cohesion control problem in %law for the 
multi-agent systems. It is shown that the policy provides flexibility for the collective motion of large-scale agents in reaching a consensus on admissible geometrical shape and can be easily modified to incorporate complex constraints (e.g., inter-agent collision). The fragmentation phenomenon in leaderless flocking tasks can be avoided due to the connectivity preservation property of the proposed cohesion law. In this distributed control method, only limited local information (field gradient) exchange is required between neighboring agents, which relaxes the inter-agent real-time communication requirement for a large-scale deployment of multi-agent systems. %among large-scale agents.

\bibliographystyle{IEEEtran}
\bibliography{bibtex/bib/IEEEabrv.bib, bibtex/bib/reference.bib}{}

\end{document}